\title{Zika Virus  Model}
\newtheorem{theorem}{Theorem}
\date{}
\title{The Impact of Mobility between Rural Areas and Forests on the Spread of Zika
}
\author{Kifah Al-Maqrashi,  
Fatma Al-Musalhi,
Ibrahim M. Elmojtaba, 
Nasser Al-Salti,\\
 Sultan Qaboos University, Muscat, Oman\\
 Email: fatma@squ.edu.om}
\begin{document}
\maketitle

\section*{Abstract}

A mathematical model of Zika virus transmission  incorporating human movement between rural areas and nearby forests is presented to investigate the role of human movement in the spread of Zika virus infections in human and mosquito  populations. Proportions of both susceptible and infected humans living in rural areas are assumed to move to nearby forest areas. Direct, indirect and vertical transmission routes are incorporated for all populations. Mathematical analysis of the proposed model has been presented. The analysis starts with normalizing the proposed model. Positivity and boundedness of solutions to the normalized model have been then addressed. The basic reproduction number has been calculated using the next generation matrix method and its relation to the three routes of disease transmission has been presented. The sensitivity analysis of the basic reproduction number to all model parameters has been investigated. The analysis also includes existence and stability of disease free and endemic equilibrium points. Bifurcation analysis has been also carried out. Finally, numerical solutions to the normalized model have been obtained to confirm the theoretical results and to demonstrate the impact of human movement in the disease transmission in human and mosquito populations.\\
\noindent {\bf Keywords:} Zika; vertical transmission; Basic Reproduction Number; Stability Analysis; Sensitivity Analysis; Bifurcation analysis.\\

\noindent {\bf 2010 AMS Subject Classification:}  34C23,  34D23, 92D30, 93A30.

\section{Introduction}
Zika is an arboviral disease in genus flavivirus  closely related to yellow fever, West Nile (WN) and dengue (DEN) viruses. It is firstly identified in 1947 in Zika Forest in Uganda during sylvatic yellow fever surveillance in a sentinel rhesus monkey \cite{4}. In 1954, it is reported in human for the first time in Nigeria \cite{6}. Zika epidemic stated as a Public Health Emergency of Intentional Concern (PHEIC) by World Health Organizing (WHO) on first of February 2016 \cite{*}. It has attracted global attention since it has worldwide spread among tropical and subtropical regions. In Yap Island, Micronesia in 2007, the first Zika outbreak occurred among humans \cite{****}. During 2013-2014 the largest epidemic of Zika ever reported was in French Polynesia \cite{****}. Since 2014, Zika virus (ZIKV) has continued spreading to other pacific islands \cite{6}. It reached southern and Central America after 2015 and  Brazil and Caribbean were highly affected by ZIKV \cite{****}. Local transmission of ZIKV was realized in 34 countries by March 2016 \cite{19}.
\\
ZIKV is transmitted primarily to human population by bites of infected female Aedes mosquito. Analysts have found 19 species of Aedes mosquitoes competent of carrying Zika infection, but the foremost common is the tropical privateer, Aedes aegypti. The vector (mosquito) can pass human population through biting after taking a blood meal from infected human. In addition, sexual interaction, perinatal transmission and blood transfusion are other routes of spreading ZIKV between human even months after infection. A pregnant lady can pass Zika to her baby, which can cause genuine birth defects. Infection of Zika increased chances of developing the infants injury with microcephaly as which reported in \cite{5} and Guillian syndrome which reported in \cite{6} from infected mothers \cite{****}. In February 2016, France registered the first sexually transmitted case of ZIKV \cite{*}.
\\
 Zika disease is characterized by mild symptoms including fever, headache, maculopapular rash, joint and muscle pain and conjunctives, etc. The clinical symptoms duration is within two to seven days after the bites \cite{*}. Most reports show that Zika is a self – limiting febrile disease that could be misidentified as dengue or chikungunya fever \cite{7}.  
 \\
 The prevention of mosquitoes bites and control of vectors by using insecticide, eradication of adult and larval breeding areas is the only possible given treatment available till now \cite{**}.
 \\
  
Understanding the virus transmission and disease epidemiology through mathematical modeling is of great importance for disease management. A number of mathematical models have been developed to study the dynamics and propose control strategies for transmission of ZIKV disease. In \cite{*}, authors proposed Zika mathematical model by assuming the standard incidence type interaction of human to human transmission of the illness. Also, they extended their work to include optimal control programs (insecticide- treated bed nets, mosquito repulsive lotions and electronic devices) in order to reduce biting rate of vector to decline spreading of the disease among human population. In \cite{**},   authors proposed Zika mathematical model  including the applications of prevention, treatments and insecticide as a best way to minimize the spreading of ZIKV disease. In \cite{***}, researchers suggested multifold Zika mathematical model. They considered transmission of the ZIKV in the adult population and infants either directly by vector bites or through vertical transmition from mothers. The model shows that asymptomatic individuals magnify the disease weight in the community. Also, they explained that postponing conception, coupling aggressive vector control and personal protection use decrease the cases of microcephaly and transmission of Zika virus.
\\
\\
Globally, the survival of around 1.6 billion rustic people depends on products obtained from local forests, in whole or in part. Those individuals live  adjacent to the forest and they have had simple survival conditions and livelihoods for many generations. They depend on those natural and wild resources to meet their needs \cite{livelihoods}.
In this paper, a mathematical model of ZIKV is constructed to demonstrate the specific and realistic conditions, where nearby movement of humans may contribute to the spread of virus infections. This happens when an infected human with mild symptoms moves from rural areas to nearby forest areas, looking for jobs or food. Additionally, the movement of a susceptible human can affect the spreading of infections via contagious mosquitoes in the forest. Hence, in this paper, we have split the vector compartment, based on mosquito location, into rural areas and nearby forest alienated areas. Human movement between rural areas and their interaction with vector populations are illustrated in Figure \ref{f1}.
In addition,  sexual and vertical transmission in human population are considered. Also, vertical transmission from a contaminated female mosquito to their offspring is suggested as a component that guarantees upkeep of ZIKV.
\\The paper is organized as follows:  model formulation is described in Section 2. 
The model analysis including positivity, boundness of solution, basic reproduction number and  sensitivity analysis   are discussed in Section 3.  Furthermore, stability analysis and bifurcation analysis  are presented.
Numerical analysis of the model  using assumed baseline parameters are given in Section 4 to illustrate the   effects  of highly sensitive parameters on human population. Finally, conclusion is given in Section 5.

\section{Model Description}

In this section, we introduce a model for Zika virus transmission between
humans and vectors in rural areas and nearby forests.
We begin the description of the model with the human compartments. We split the human population into 
susceptibles $S_h$, symptomatic $I_h$ and recovered
$R_h$. Susceptible humans $ S_h$ can get infected with Zika via three main routes \cite{CDC1}: via a mosquito
bite (vector transmission), via sexual transmission or blood transfusion 
(direct tranmission), or by being passed from mother to a newborn child
(vertical transmission). Zika causes nearly no mortality among humans,
and has been a public health crisis for relatively
 short period of time, so we 
assume the total human population remains  constant (i.e) $ S_h+I_h+R_h=N_H $. 

We assume that a fraction $\epsilon_1$ of newborns to
infected symptomatic are affected and  that affected
 newborns enter the symptomatic class.
 Evidence suggests that  fraction is
about $2/3$ \cite{brasil2020zika}. We  also assume that susceptible humans can get
infected by infectious mosquitoes that live in rural areas $I_v$ and from
infectious humans (symptomatic ) through sexual transmission
or other direct routes. A proportion $\kappa_1$ of the susceptible individuals may also get infected  by infectious mosquitoes that live in forest alienated nearby rural areas $I_{u}$ due to their movement to forest areas. A proportion $\kappa_2$ of the  infected
individuals are also assumed to move from rural areas to the nearby forests such that $\kappa_1 > \kappa_2$ and hence they may infect mosquitoes that live in forests.  
We split the vector population into rural population ($S_v$, $I_v$) and nearby forest population ($S_{u}$, $I_{u}$) and because mosquitoes travel over distances of not more than a few kilometers, they will have a direct interaction only with vector and human populations in their corresponding areas.
Evidence suggests that Zika virus is transmitted vertically in the mosquito vector
\cite{Than} and this is the main pathway it survives the colder months.
We incorporate vertical transmission $ \epsilon_2,\epsilon_3 $ of the Zika virus in both vector populations, respectively.
Rural and forest mosquitoes are assumed to be only infected by infectious humans.
The infection period of mosquitoes ends when the mosquitoes die. The overall vectors population at time $ t  $ are $ S_v+I_v=N_V  $ and $ S_{u}+I_{u}=N_{U}. $
\begin{center}
\begin{figure}[H]
\includegraphics[scale=.45]{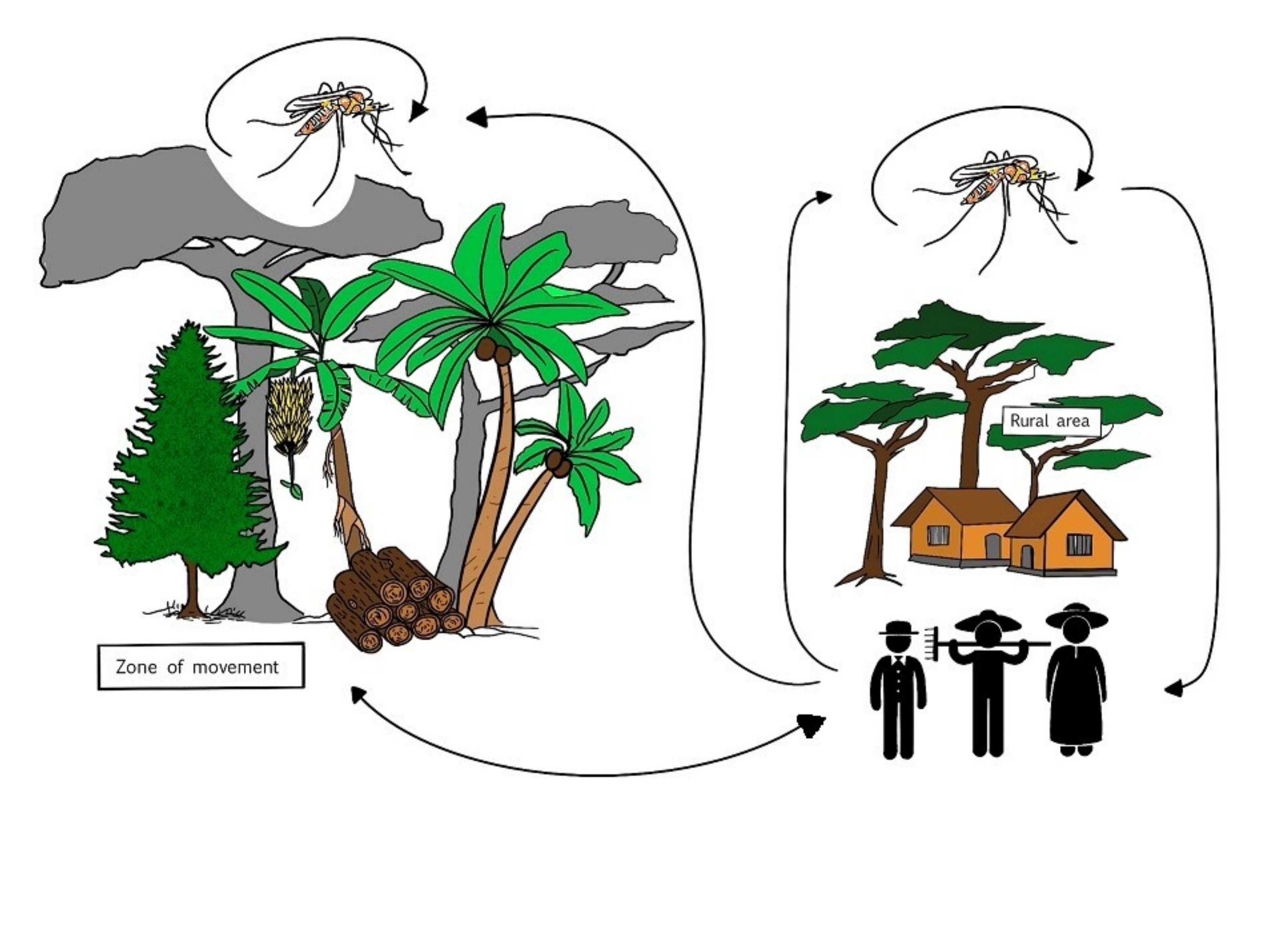}\hspace{0cm}
\caption{Illustrated figure for human movement and their interactions with vector populations of  ZIKV.}
\label{f1}
\end{figure}
\end{center}
The set of  non-linear differential equations that represents  the proposed mathematical model is given by:

\begin{eqnarray}{\label{1}}
S_h^\prime & = & \mu_H N_H  - \mu_H  \epsilon_1 I_h  -  \beta_1 \theta_1 I_v \frac{S_h}{N_H} - \kappa_1\beta_2 \theta_1 I_{u} \frac{S_h}{N_H} - \lambda I_h  \frac{S_h}{N_H} - \mu_H S_h \nonumber\\
I_h^\prime & = & \mu_H \epsilon_1 I_h  +  \left(\beta_1 \theta_1 I_v \frac{S_h}{N_H} + \kappa_1\beta_2 \theta_1 I_{u} \frac{S_h}{N_H} + \lambda I_h  \frac{S_h}{N_H}\right)- (\gamma + \mu_H) I_h \nonumber\\
R_h^\prime & = &  \gamma I_h - \mu_H R_h  \nonumber\\
S_v^\prime & = &  \mu_V N_V - \mu_V  \epsilon_2 I_v - \beta_1 \theta_2 S_v \frac{I_h }{N_H}  - \mu_V S_v \label{model1} \\
I_v^\prime & = &  \mu_V \epsilon_2 I_v + \beta_1 \theta_2 S_v \frac{I_h }{N_H} - \mu_V I_v \nonumber\\
S_{u}^\prime & = &  \mu_U N_U - \mu_U  \epsilon_3 I_{u} - \beta_2 \theta_2 \kappa_2 S_{u}  \frac{ I_h}{N_H}  - \mu_U S_{u} \nonumber \\
I_{u}^\prime & = &  \mu_U \epsilon_3 I_{u} + \beta_2 \theta_2 \kappa_2 S_{u}  \frac{ I_h}{N_H}  - \mu_U I_{u} \nonumber
\end{eqnarray} 
with  non negative initial conditions $S_h(0), I_h(0),R_h(0),S_v(0),I_v(0),S_{u}(0),I_{u}(0).$ 
 \\In addition, the parameters of the system are defined in  Table \ref{table1}.
\begin{table}[H]
\caption{Parameters used in model (\ref{model1}) }
\label{table1}
\bigskip
\centering
\begin{tabular}{|l|c|}
\hline
Parameter & Symbol \\
\hline
Natural death/birth rate of humans & $\mu_H$ \\
Natural death/birth rate of mosquito in rural areas & $\mu_V$ \\
Natural death/birth rate of mosquito in forest areas & $\mu_U$ \\
Biting rate of rural mosquitoes on humans & $\beta_1$ \\
Biting rate of forest mosquitoes on humans & $\beta_2$ \\
Transmission probability from an infectious mosquito
to a susceptible human  & $\theta_1$ \\
Transmission probability from an infectious human to
a susceptible mosquito  & $\theta_2$ \\
Direct (sexual) transmission rate between humans & $\lambda$\\

Recovery rate of humans & $\gamma$\\

Probability of vertical transmission in humans, rural mosquito, forest mosquitoes & $\epsilon_1$, $\epsilon_2$, $\epsilon_3$\\

Fraction of susceptible humans moving from rural to forest areas
&$\kappa_1$\\

Fraction of infected humans moving from rural to forest areas
& $\kappa_2$\\ 
\hline
\end{tabular}
\end{table} 
Let $ S_H= \dfrac{S_h}{N_H} ,$ $\, I_H= \dfrac{I_h}{N_H}, $  $\, R_H= \dfrac{R_h}{N_H}, $  $ \,S_V= \dfrac{S_v}{N_V}, $ $\, I_V= \dfrac{I_v}{N_V}, $ $\, S_U= \dfrac{S_{u}}{N_{U}}, $ $ \,I_U= \dfrac{I_{u}}{N_{U}}, $  \\such that $$S_H+I_H+R_H=1, \quad S_V+I_V=1, \quad S_U+I_U=1.  $$ 
\vspace{0.1cm}
\\
Thus, the considered model (\ref{1}) have been normalized and rewritten as follows: 

\begin{eqnarray}{\label{2}}
S_H^\prime & = & \mu_H   - \mu_H  \epsilon_1 I_H  -  \beta_1 \theta_1 \alpha_1 I_V S_H  -\kappa_1 \beta_2 \theta_1 \alpha_2 I_U S_H  - \lambda I_H S_H- \mu_H S_H \nonumber\\
I_H^\prime & = & \mu_H \epsilon_1 I_H  +  \left(\beta_1 \theta_1 \alpha_1 I_V S_H + \kappa_1 \beta_2 \theta_1 \alpha_2 I_U S_H  + \lambda I_H   S_H\right)- (\gamma + \mu_H) I_H \nonumber\\
R_H^\prime & = &  \gamma I_H - \mu_H R_H  \nonumber\\
S_V^\prime & = &  \mu_V  - \mu_V  \epsilon_2 I_V - \beta_1 \theta_2 S_V I_H   - \mu_V S_V \label{model2} \\
I_V^\prime & = &  \mu_V \epsilon_2 I_V + \beta_1 \theta_2 S_V I_H  - \mu_V I_V \nonumber\\
S_U^\prime & = &  \mu_U - \mu_U  \epsilon_3 I_U - \beta_2 \theta_2 \kappa_2 S_U I_H - \mu_U S_U \nonumber \\
I_U^\prime & = &  \mu_U \epsilon_3 I_U + \beta_2 \theta_2 \kappa_2 S_U I_H - \mu_U I_U \nonumber
\end{eqnarray} 
\\
 where   $ \alpha_1=\dfrac{N_V}{N_H} $ and $ \alpha_2=\dfrac{N_{U}}{N_H} $ and  with  non-negative initial conditions
  $$X(0):=(S_H(0),I_H(0),R_H(0),S_V(0),I_V(0),S_U(0),I_U(0))^T  .$$

\begin{center}
\begin{figure}[H]
\includegraphics[scale=.42]{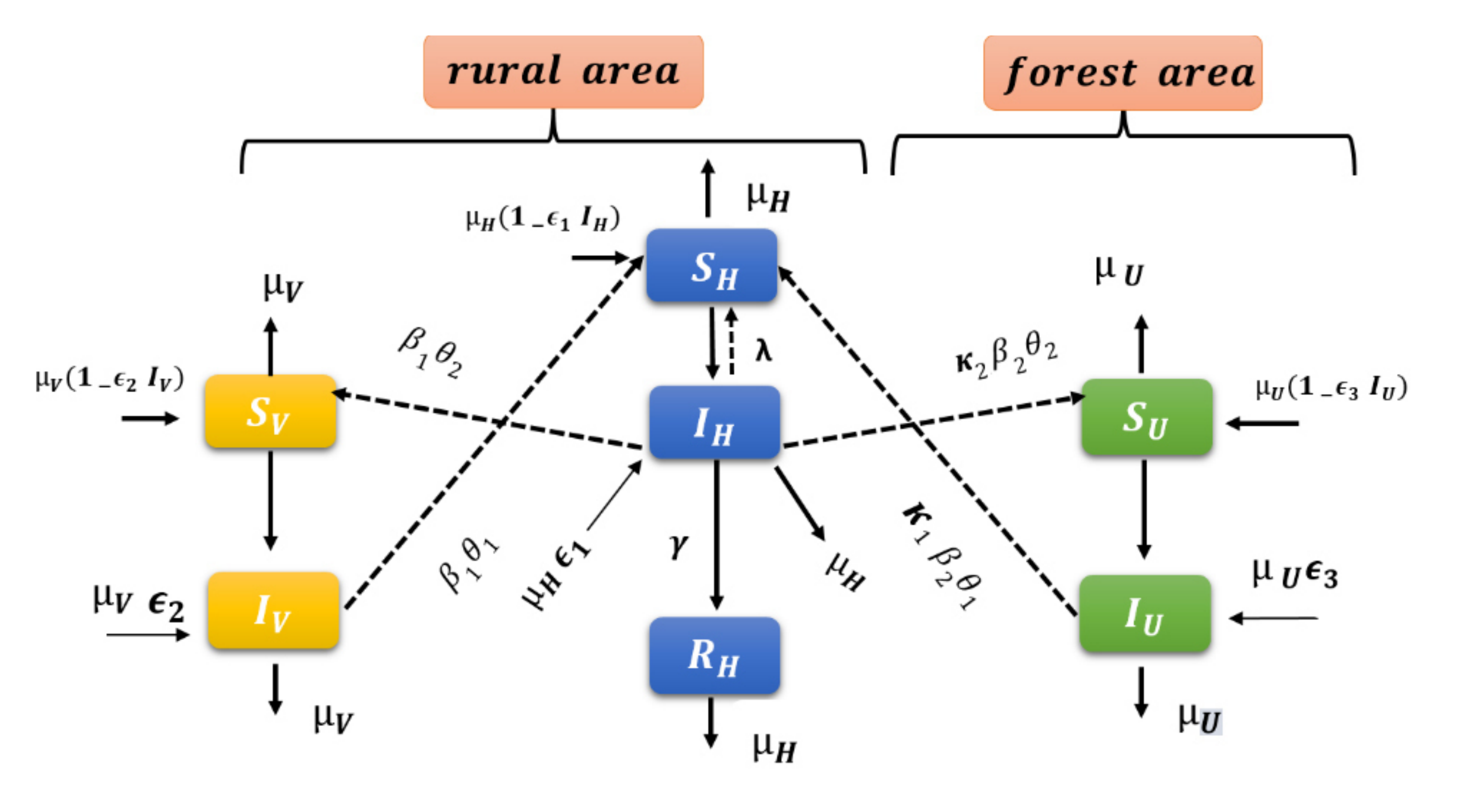}\hspace{.3cm}
\caption{Progression diagram of the proposed ZIKV model.  }
\end{figure}
\end{center}


\section{ Model Analysis.}
In this section, the positivity of solutions, positive invariant set and the basic reproduction number have been  discussed. Also,  sensitivity analysis and results related to  stability analysis and bifurcation analysis are presented.
\subsection{ Positivity of Solutions and Positively Invariant Set}
It is clear that model (\ref{1}) together with initial conditions has a unique solution. Next we need to show that all solutions  remain non-negative for all $ t\in [0,\infty) $ for arbitrary choice of initial conditions in order to have an epidemiological convince results. The following theorem demonstrates the positivity and boundedness of state variables: 
\begin{theorem}
	The solutions  $S_H(t),\,I\,_H(t),\,R_H(t),\,S_V(t),\,\,I_V(t),\,S_U(t),$ and $I_U(t) $ of system (\ref{2}) with non-negative initial conditions $ S_H(0),I_H(0),R_H(0),S_V(0),I_V(0),S_U(0),I_U(0)$ remain positive for all time $t >0 $ in a positively invariant closed set $$\Omega := \left\lbrace (S_H,I_H,R_H,S_V,I_V,S_U, I_U)^T \in R^7_+:0\leqslant S_H(t),I_H(t),R_H(t),S_V(t),I_V(t),S_U(t),I_U(t)\leqslant 1      \right\rbrace. $$
\end{theorem}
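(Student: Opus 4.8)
The strategy is to reduce the claim to a standard invariance check, and for that it helps to record three affine conservation laws first. Adding the $S_H$, $I_H$ and $R_H$ equations of \eqref{model2}, all infection and vertical-transmission terms cancel and one is left with $(S_H+I_H+R_H)'=\mu_H\bigl(1-(S_H+I_H+R_H)\bigr)$; in the same way $(S_V+I_V)'=\mu_V\bigl(1-(S_V+I_V)\bigr)$ and $(S_U+I_U)'=\mu_U\bigl(1-(S_U+I_U)\bigr)$. Since by the normalization $S_H(0)+I_H(0)+R_H(0)=S_V(0)+I_V(0)=S_U(0)+I_U(0)=1$, uniqueness for these scalar linear equations forces $S_H+I_H+R_H\equiv1$, $S_V+I_V\equiv1$ and $S_U+I_U\equiv1$ on the whole interval of existence. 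In particular, as soon as all seven components are known to be nonnegative they automatically lie in $[0,1]$, so the upper bounds claimed in the theorem will come for free once positivity is settled.

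\textbf{Step 2 (positivity).} The right-hand side of \eqref{model2} is a polynomial, hence locally Lipschitz, so (as already observed for the original model) there is a unique solution on a maximal interval $[0,T_{\max})$. I would then verify that the vector field points into the nonnegative orthant along each coordinate face: on $\{S_H=0\}$ one gets $S_H'=\mu_H(1-\epsilon_1 I_H)\ge0$ because the vertical-transmission probability satisfies $\epsilon_1\le1$ and $I_H\le1$ by Step 1; on $\{I_H=0\}$, $I_H'=\bigl(\beta_1\theta_1\alpha_1 I_V+\kappa_1\beta_2\theta_1\alpha_2 I_U\bigr)S_H\ge0$; on $\{R_H=0\}$, $R_H'=\gamma I_H\ge0$; on $\{S_V=0\}$, $S_V'=\mu_V(1-\epsilon_2 I_V)\ge0$; on $\{I_V=0\}$, $I_V'=\beta_1\theta_2 S_V I_H\ge0$; and analogously on $\{S_U=0\}$ and $\{I_U=0\}$. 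Because Step 1 confines the solution to the affine slice on which the three sums equal $1$, the faces $\{x_i=1\}$ need no separate discussion, and a standard invariance (Nagumo-type) argument then gives that $\Omega$ is positively invariant. Equivalently and more explicitly, each equation has the form $x'=f(t)+c(t)x$ with $c$ continuous and $f\ge0$ so long as the state stays in $\Omega$; from $x(t)=x(0)e^{\int_0^t c}+\int_0^t f(\tau)e^{\int_\tau^t c}\,d\tau$ one runs a first-exit argument, showing that if $\tau^\star:=\sup\{t\in[0,T_{\max}):X|_{[0,t]}\subset\Omega\}$ were $<T_{\max}$ then $X(\tau^\star)\in\Omega$ with some $x_i(\tau^\star)=0$ and $x_i'(\tau^\star)=f(\tau^\star)\ge0$, contradicting maximality of $\tau^\star$.

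\textbf{Step 3 (global existence).} Since $\Omega$ is closed and bounded and the solution remains inside it, the solution cannot blow up in finite time, so $T_{\max}=\infty$. Hence $0\le S_H(t),I_H(t),R_H(t),S_V(t),I_V(t),S_U(t),I_U(t)\le1$ for all $t>0$ and $\Omega$ is positively invariant, which is precisely the assertion.

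The step I expect to demand the most care is Step 2, because it conceals a circularity: the sign of $S_H'$ on $\{S_H=0\}$ (and similarly of $S_V'$ and $S_U'$) cannot be decided equation by equation, since the source term $\mu_H-\mu_H\epsilon_1 I_H$ has a minus sign and its nonnegativity needs $I_H\le1$, which itself rests on nonnegativity of the other human compartments. The conservation identities of Step 1 are exactly what break this loop, and it is the ``all seven components at once'' nature of the first-exit (or Nagumo-type sub-tangentiality) argument that turns the bootstrap into a rigorous proof rather than a circular one.
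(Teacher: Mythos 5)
Your proposal is correct and takes essentially the same approach as the paper: the paper also obtains the bounds from the sum equations $\Phi_i^\prime=\mu_i(1-\Phi_i)$ and proves positivity by a first-time-of-vanishing contradiction argument, using the identity $I_H=1-S_H-R_H$ (your conservation law) to make the sign of $S_H^\prime$ on $\lbrace S_H=0\rbrace$ nonnegative. The only difference is the order of the two steps (the paper argues positivity first, then solves the $\Phi_i$ equations), which is immaterial.
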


\begin{proof}
	Assume that the initial conditions of system (\ref{2}) are non-negative. 
	Let $t_1>0$ be the first time at which there exists at least one component which is equal to zero and other components are non-negative on $[0,t_1]$. In the following, we will show that none of the components can be zero at $t_1$. Let us first, assume that $S_H(t_1)=0$ and other components are non-negative on $[0,t_1].$ Now, $S'_H$ can be written as 
	$$ S_H^\prime  = \mu_H (1-   \epsilon_1)+ \mu_H \epsilon_1 R_{H}- m_1 S_H-\mu_H (1-\epsilon_1)S_{H}, $$  where,  $	m_1 = \beta_1 \theta_1 \alpha_1 I_V +\kappa_1\beta_2 \theta_1 \alpha_2 I_U + \lambda I_H>0 .$ Then, at $t_1$, we have
	$$ \dfrac{S_H(t)}{dt}\bigg|_{t=t_1}  =  \mu_H (1-   \epsilon_1)+ \mu_H \epsilon_1 R_{H}(t_1)>0,  $$
	which means that $S_H(t)$ is strictly monotonically
	increasing at $t_1,$ i.e;  $S_H(t)<S_H(t_1)$ for all $t\in (t_1-\epsilon, t_1),$ where $\epsilon>0.$  Since $S_H(t_1)=0,$ then, $S_H(t)<0$ on $(t_1-\epsilon, t_1).$
	This leads to  a contradiction. Therefore, $S_H(t)$ can not be zero at $t_1$. Now, we assume that $I_H(t_1)=0$  and other components are non-negative. Then,
	$$ \dfrac{I_H(t)}{dt}\bigg|_{t=t_1} =(\beta_1 \theta_1 \alpha_1 I_V(t_1) +\kappa_1\beta_2 \theta_1 \alpha_2 I_U(t_1))S_H(t_1)>0,$$
	which means that $I_H(t)$ is strictly monotonically
	increasing at $t_1.$  Hence, we also get a contradiction.\\
	Next,  assume that $R_H(t_1)=0$  and other components are non-negative. Then, 
	$$\dfrac{R_H(t)}{dt}\bigg|_{t=t_1} =\gamma I_H(t_1)>0,$$
	which again leads to a contradiction.  \\
	In a similar manner, one can prove that the remaining components of vector populations $ S_V(t),I_V(t),S_U(t),I_U(t)  $ can not be zero at $t_1$.
	Hence, from the above, we conclude that such a point $t_1$ at which at least one component is zero does not exist.  Hence, all components remain positive for all time $t>0$.
	\\For the positively invariant closed set $\Omega$, we first note that the set $ \Omega $ is said to be positively invariant set if the initial conditions are in
	$  \Omega $ implies that $$ (S_H(t),I_H(t),R_H(t),S_V(t),I_V(t),S_U(t),I_U(t))^T \in\Omega. $$\\ Let 
	$$  \Phi(t)=(\Phi_1(t),\Phi_2(t),\Phi_3(t))^T                              $$
	where $\Phi_1(t)= S_H(t)+I_H(t)+R_H(t),\Phi_2(t)=S_V(t)+I_V(t),\Phi_3(t)= S_U(t)+I_U(t), $
	\\then 
	\[ \Phi'(t)= \left[ {\begin{array}{c}
			\mu_H-\mu_H\Phi_1(t)
			\\
			\mu_V-\mu_V\Phi_2(t)
			\\
			\mu_U-\mu_U\Phi_3(t)
	\end{array}}\right]. \] 
	
	\vspace{0.3cm}
	Now, solving for $ \Phi_1, \Phi_2$ and $\Phi_3$, we get 
		$$   \Phi_1(t)= 1-\left( 1-  \Phi_1(0)\right)e^{-\mu_H t},       $$
		$$   \Phi_2(t)= 1-\left( 1-  \Phi_2(0)\right)e^{-\mu_V t},       $$
	$$   \Phi_3(t)= 1-\left( 1-  \Phi_3(0)\right)e^{-\mu_U t},      $$
	where 
	$ \Phi_1(0)= S_H(0)+I_H(0)+R_H(0),\Phi_2(0)=S_V(0)+I_V(0)  $ and $ \Phi_3(0)= S_U(0)+I_U(0). $
	\\
	It is straight forward to conclude that $$ \Phi_1(t)\leqslant 1  \quad \text{if} \quad \Phi_1(0)\leqslant 1, $$
	$$ \Phi_2(t)\leqslant 1  \quad \text{if} \quad \Phi_2(0)\leqslant 1, $$
	$$ \Phi_3(t)\leqslant 1  \quad \text{if} \quad \Phi_3(0)\leqslant 1. $$
	Thus, we have  $ 0 \leqslant S_H(t),I_H(t),R_H(t),S_V(t),I_V(t),S_U(t),I_U(t) \leqslant 1 $ and hence the set  $ \Omega $ is  positively invariant set. Moreover, the set $ \Omega $ is a globally attractive set since  if $ \Phi_i(0)> 1$ then $  \underset{t \to \infty} {\lim}\Phi_i(t) =1 $ for $i=1,2,3.$
\end{proof}

\subsection{The Basic Reproduction Number }
The model (\ref{2}) has a disease free equilibrium (DFE)  :$$ Z^0 := (S_H^0,0,0,S_V^0,0,S_U^0,0)\in \Omega $$
where,  $S_H^0= S_V^0=S_U^0 =1 $. 
\\
The number of new infections produced by a typical infected individual in a population of DFE is called the basic reproduction number $ R_0 $ which can be obtained
by applying Next Generation Method \cite{ngm}. The  next generation matrix is
\\
\[PQ^{-1}= \left[ {\begin{array}{ccc}
\dfrac{\lambda}{-\mu_H \epsilon_1+\gamma+\mu_H} & \dfrac{\alpha_1\beta_1 \theta_1}{-\mu_V\epsilon_2+\mu_V} & \dfrac{\kappa_1\alpha_2 \beta_2 \theta_1}{ -\mu_U \epsilon_3+\mu_U}
\\
 \dfrac{\beta_1 \theta_2}{-\mu_H \epsilon_1+\gamma+\mu_H} & 0& 0
\\
\dfrac{\kappa_2 \beta_2 \theta_2}{-\mu_H \epsilon_1+\gamma+\mu_H} & 0 & 0
\end{array}}\right]. \]

\vspace{0.5cm}
where,  $ P $  is  Jacobian of the transmission matrix  describes the  production of new infections, whereas  $ Q $  is Jacobian of the transition matrix describes changes in state,  which are given by  
\[P = \left[ {\begin{array}{ccc}
\lambda & \alpha_1\beta_1 \theta_1 & \kappa_1\alpha_2 \beta_2 \theta_1
\\
 \beta_1 \theta_2 & 0 & 0
\\
\kappa_2 \beta_2 \theta_2 & 0 & 0
\end{array}}\right] \]
\\
 and 
 \\
 \[Q= \left[ {\begin{array}{ccc}
-\mu_H \epsilon_1+\gamma+\mu_H & 0 & 0
\\
 0 & -\mu_V\epsilon_2+\mu_V & 0
\\
0 & 0 & -\mu_U \epsilon_3+\mu_U
\end{array}}\right]. \]
\vspace{0.5cm}
 The basic reproduction number $ R_0 $ is the the dominant eigenvalue of $ PQ^{-1} $, which can be expressed as:
 \begin{equation}\label{10}
 R_0 = \dfrac{1}{2}\left[ R_{HH}+\sqrt{R_{HH}^2+ 4(R_{HV}+R_{HU})}\right]  
 \end{equation}
 
 where, 
 
 $\quad R_{HH}= \dfrac{\lambda}{\gamma+\mu_H(1-\epsilon_1)}  $,$\quad \quad R_{HV}= \dfrac{\alpha_1 \beta_1^2 \theta_1\theta_2 }{\mu_V (1-\epsilon_2)[\gamma+\mu_H(1-\epsilon_1)]} $ 
 \\ 
 \\
 and   $\quad R_{HU}= \dfrac{\kappa_1\kappa_2 \alpha_2 \beta_2^2 \theta_1\theta_2 }{\mu_U (1-\epsilon_3)[\gamma+\mu_H(1-\epsilon_1)]}. $ \\
 \\
 Note that $ R_{HH} $ denotes the reproduction number due to human to human transmission, $ R_{HV} $ denotes the reproduction number due to interaction between human and vector in rural area  and $ R_{HU} $ denotes the  reproduction number due to  interaction between human and vector in forest area. The square root represents the geometric mean, which means that  two steps are required for transmission of the disease. Moreover, the  threshold of the disease occurs at $ R_0=1 \Leftrightarrow  R_{HH}+R_{HV}+R_{HU}=1. $ Also, it can be easily proven that $ R_0<1 $ implies $ R_{HH}+R_{HV}+R_{HU}<1$, which means in order for the disease to die out, all the transmission roustes represented by $ R_{HH},$  $ R_{HV} $ and $ R_{HU} $ need to be reduced. Clearly, this will also imply that $R_0 > 1$ whenever $R_{HH}$, $R_{HV}$ or $R_{HU}$ is greater that one. 
 
\subsection{Sensitivity Analysis of the Basic Reproduction Number.}

A fundamental and valuable numeric value for  the study   of infectious diseases dynamics is the basic reproduction number $ R_0,$ since it predicts whether an outbreak will expected to continue (when $R_0 > 1$) or die out (when $R_0 < 1$). Sensitivity analysis of the basic reproduction number allows us to determine which model parameters have the most impact on $R_0$.  A high   sensitive parameter leads to a high quantitative variation in 
$ R_0 $. Moreover, sensitivity analysis highlights the parameters that must be attacked by intervention and treatment strategies. Here, we adopt the elasticity index (normalized forward sensitivity index )\cite{nsens}, $E_P^{R_0}$, which computes the relative change of $ R_0 $ with respect to any parameter  $ P $ as follows
\begin{equation}\label{12}
 E_P^{R_0}= \dfrac{P}{R_0}\lim_{\bigtriangleup p \to 0}\dfrac{\bigtriangleup R_0}{\bigtriangleup P} = \dfrac{P}{R_0}\dfrac{\partial R_0}{\partial P}.
 \end{equation}
\vspace{0.1cm} 
Using the parameter values listed in Table \ref{table2} and the explicit expression of the basic reproduction number (\ref{10}), the estimated values of the elasticity indices and their interpretation are given in Table \ref{table3}. 
 
\begin{table}[H]
\caption{Parameters values used in model (\ref{2}) }
\label{table2}
\bigskip
\centering
\begin{tabular}{|c|c|c|}
\hline
Parameter & Value &Source\\
\hline
$ \beta_1 $ & 0.3-1.5 bites per day per mosquito & \cite{biting}\\
$ \beta_2 $ &0.3-1.5 bites per day per mosquito & \cite{biting} \\
$ \lambda $ &0.27 &Assumed \\
$ \kappa_1 $ &0.5 & Assumed \\
$ \kappa_2 $ &0.3 & Assumed \\
$ \theta _1 $ & 0.1–0.75 per bites & \cite{biting} \\
$ \theta_2 $ & 0.3–0.75 per bites & \cite{biting} \\
$ \gamma $ & 1/6 per days& \cite{gamma} \\
$ \alpha_1 $ & 2& Assumed \\

$ \alpha_2 $ &3 &Assumed \\

$ \mu_H $ & 1/(60*365) per day & Assumed\\ 
$ \mu_V $ &1/14 per day &  \cite{vdeath}\\
$ \mu_U $ &1/14 per day &\cite{vdeath} \\
$ \epsilon_1 $ &0.67 & \cite{brasil2020zika} \\
$ \epsilon_2 $ & 0.06&\cite{lai2020vertical} \\
$ \epsilon_3 $ &0.06 &\cite{lai2020vertical}  \\
\hline
\end{tabular}
\end{table} 
\vspace{0.5cm}

\begin{table}[H]
\caption{Sensitivity indices and their interpretation  }
\label{table3}
\bigskip
\centering
\begin{tabular}{|c|c||c|}
\hline
Parameter& Sensitivity index &Interpretion(increase or decrease)\\
\hline
$ \beta_1 $ & 0.5151 & $ \beta_1 $ by $10\%$ , $R_0 $ by $ 51\% $  \\
$\beta_2 $ & 0.0645&  $ \beta_2 $ by $10\%$ , $R_0 $ by $ 6.4\% $  \\
$ \lambda $ &0.4204 &  $ \lambda $ by $10\%$ , $R_0 $ by $ 42\% $ \\
$ \kappa_1 $ & 0.0323 &  $ \kappa_1 $ by $10\%$ , $R_0 $ by $ 3.2\% $ \\
$ \kappa_2 $ & 0.0323 &  $ \kappa_2 $ by $10\%$ , $R_0 $ by $ 3.2\% $ \\
$ \theta _1 $ &0.2898 &  $ \theta_1 $ by $10\%$ , $R_0 $ by $ 29\% $  \\
$ \theta_2 $ & 0.2898 &   $ \theta_2 $ by $10\%$ , $R_0 $ by $ 29\% $ \\
$ \gamma $ & -0.7099&  $ \gamma $ by $10\%$ , $R_0 $ by $ 71\% $  \\
$ \alpha_1 $ & 0.2575& $ \alpha_1 $ by $10\%$ , $R_0 $ by $ 26\% $  \\

$ \alpha_2 $ & 0.0323& $ \alpha_2 $ by $10\%$ , $R_0 $ by $ 3.2\% $  \\
$ \mu_H $ & -0.000196 & $ \mu_H $ by $10\%$ , $R_0 $ by $ 0.01\% $  \\ 
$ \mu_V $ & -0.2575 &  $ \mu_V $ by $10\%$ , $R_0 $ by $ 16\% $  \\
$ \mu_U $ &-0.0323 &  $ \mu_U $ by $10\%$ , $R_0 $ by $ 3.2\% $ \\
$ \epsilon_1 $ & 0.399e-5 & $ \epsilon_1 $ by $10\%$ , $R_0 $ by $ 0.0003\% $  \\
$ \epsilon_2 $ & 0.01073& $ \epsilon_2$ by $10\%$ , $R_0 $ by $ 1.07\% $  \\
$ \epsilon_3 $ & 0.000997 & $ \epsilon_3$ by $10\%$ , $R_0 $ by $ 0.09\% $  \\
\hline
\end{tabular}
\end{table}  
 \vspace{0.3cm}

Obviously, the corresponding model parameters will affect the basic reproduction number either positively or negatively. The positive sign of the sensitivity indices of the parameters $ \beta_1, \beta_2, \lambda , ,\kappa_1,
\kappa_2, \theta_1, \theta_2, \alpha_1 $ and $ \alpha_2 $ denotes the increase of the basic reproduction number $ R_0 $ as that parameter changes,  whereas the  negative sign of  the sensitivity indices of the parameters $ \gamma, \mu_H, \mu_V$ and $ \mu_U$ denotes the decrease of the basic reproduction number $ R_0 $ as that parameter changes. Moreover, the magnitude denotes the relative importance of the spotlight parameter. Clearly, the most efficacious parameter is the biting rate of rural mosquitoes on humans $\beta_1 $, i.e.,  it has a strong positive impact on the value of $ R_0. $  Also, the direct (sexual) transmission rate between humans  $  \lambda$   has  strong positive impact on $ R_0$.
 On the other hand,  the recovery rate of humans $ \gamma $ has the most negative sensitive index, it will decrease $ R_0 $ by $ 71\% $ when it increases by $ 10\%. $ The transmission probabilities per bite per human and  vector $ \theta_1  $ and $ \theta_2, $ respectively, have positive influence on the value of  $ R_0. $ Clearly, there is a very small  positive effect of  the vertical transmission of human and both vectors $ \epsilon_1,\epsilon_2,\epsilon_3 $  because their
elasticity indices are very small. Similarly, one can note that the proportions of movement for susceptible and infected humans $\kappa_1$ and $\kappa_2$ have a small positive effect on $R_0.$


\subsection{Local Stability of the DFE}
Here we discuss the local stability of the DFE by finding the eigenvalues of  linearized system. The following theorem is devoted to the local stability of the DFE, i.e., the disease would be eliminated on certain time under certain conditions. 

 \begin{theorem}
 \label{ls}
 If $  R_0\leq  1, $  the  DFE of the model (\ref{2}) is locally asymptotically stable. If $  R_0> 1, $ it is unstable.
 \end{theorem}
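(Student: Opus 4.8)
I would linearise the normalised system (\ref{2}) at the DFE $Z^{0}$ and locate the spectrum of the Jacobian. Ordering the unknowns as $(I_H,I_V,I_U,S_H,R_H,S_V,S_U)$ and using that every incidence term in the $S_H$, $R_H$, $S_V$, $S_U$ equations is at least quadratic in the state and hence has zero derivative at $Z^{0}$ (all infected components vanish there), the Jacobian at the DFE is block lower-triangular,
\[
J(Z^{0})=\begin{pmatrix} P-Q & 0\\ B & D\end{pmatrix},\qquad D=\mathrm{diag}(-\mu_H,-\mu_H,-\mu_V,-\mu_U),
\]
where $P$ and $Q$ are exactly the $3\times 3$ matrices from the next-generation construction above. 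Hence $\sigma(J(Z^{0}))=\sigma(P-Q)\cup\{-\mu_H,-\mu_H,-\mu_V,-\mu_U\}$, and since the four demographic eigenvalues are negative, the stability of $Z^{0}$ is decided entirely by the $3\times 3$ block $P-Q$.

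For that block I see two routes. The short one is to quote the van den Driessche--Watmough theorem \cite{ngm}: $Q$ is a nonsingular $M$-matrix and $P\ge 0$, so the stability modulus $s(P-Q)$ is negative precisely when $\rho(PQ^{-1})=R_{0}<1$ and positive precisely when $R_{0}>1$, which already gives both assertions for $R_{0}\neq 1$. To keep things self-contained and connect with formula (\ref{10}), I would instead compute the characteristic polynomial of $P-Q$ directly. Writing $d_{1}=\gamma+\mu_H(1-\epsilon_{1})$, $d_{2}=\mu_V(1-\epsilon_{2})$, $d_{3}=\mu_U(1-\epsilon_{3})$, it takes the form $\chi(x)=x^{3}+a_{1}x^{2}+a_{2}x+a_{3}$, where $a_{1}=d_{1}(1-R_{HH})+d_{2}+d_{3}$, $a_{2}=d_{1}d_{2}(1-R_{HH}-R_{HV})+d_{1}d_{3}(1-R_{HH}-R_{HU})+d_{2}d_{3}$ and $a_{3}=d_{1}d_{2}d_{3}(1-R_{HH}-R_{HV}-R_{HU})$; recall from the previous subsection that the sign of $R_{0}-1$ coincides with that of $R_{HH}+R_{HV}+R_{HU}-1$.

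If $R_{0}<1$, then each of $R_{HH}$, $R_{HH}+R_{HV}$, $R_{HH}+R_{HU}$ and $R_{HH}+R_{HV}+R_{HU}$ is less than $1$ (all three reproduction numbers being nonnegative), so $a_{1},a_{2},a_{3}>0$ term by term. The remaining Routh--Hurwitz inequality $a_{1}a_{2}>a_{3}$ I would obtain by expanding $a_{1}a_{2}$: after cancelling $a_{3}$, the $d_{1}d_{2}d_{3}$-terms collapse to $2(1-R_{HH})d_{1}d_{2}d_{3}$ and what remains is a sum of nonnegative monomials in $d_{1},d_{2},d_{3}$ and in the positive quantities $1-R_{HH}-R_{HV}$ and $1-R_{HH}-R_{HU}$; hence $a_{1}a_{2}-a_{3}>0$, and by Routh--Hurwitz all roots of $\chi$ have negative real part, so $Z^{0}$ is locally asymptotically stable. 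If $R_{0}>1$, then $a_{3}<0$; since $\chi(0)=a_{3}<0$ while $\chi(x)\to+\infty$ as $x\to+\infty$, $\chi$ has a positive real root, so $P-Q$ has an eigenvalue with positive real part and $Z^{0}$ is unstable.

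The genuinely delicate case is the threshold $R_{0}=1$, where $a_{3}=0$ and $0\in\sigma(P-Q)$, so linearisation alone is inconclusive. I would handle it by a center-manifold reduction at $R_{0}=1$ (or, equivalently, by appealing to the bifurcation analysis carried out later in the paper), showing that the bifurcation at the threshold is forward, so that no endemic equilibrium lies near $Z^{0}$ for $R_{0}\le 1$ and the DFE stays (non-hyperbolically) stable there. Apart from this boundary subtlety, the only care needed is the bookkeeping that brings $J(Z^{0})$ into block-triangular form; everything else is routine algebra.
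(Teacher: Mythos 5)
Your proposal follows essentially the same route as the paper: linearise at $Z^{0}$, peel off the four demographic eigenvalues $-\mu_H,-\mu_H,-\mu_V,-\mu_U$, and apply the Routh--Hurwitz criterion to the cubic of the infected block, whose coefficients $a_1,a_2,a_3$ coincide exactly with the paper's $k_1,k_2,k_3$. You are in fact slightly more careful than the paper at the two points it glosses over, namely the explicit positive-root argument for instability when $R_0>1$ and the non-hyperbolic threshold $R_0=1$, where the strict Routh--Hurwitz inequalities no longer apply and your appeal to the forward bifurcation is what actually justifies the stated ``$R_0\leq 1$''.
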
 
\vspace{0.3cm}

 \begin{proof}
 
  The linearized matrix of  the system (\ref{2}) at the disease free equilibrium $  Z^0 $ is
\[J_{Z^0} = \left[ {\begin{array}{ccccccc}
-\mu_H & -\mu_H \epsilon_1-\lambda & 0 & 0 &-\alpha_1\beta_1\theta_1 & 0 &-\kappa_1\alpha_2\beta_2\theta_1 
\\
0 & \mu_H \epsilon_1-\lambda -\gamma-\mu_H & 0 & 0 &\alpha_1\beta_1\theta_1 & 0 & \kappa_1\alpha_2\beta_2\theta_1
\\
0 & \gamma &  -\mu_H & 0 & 0 &0 & 0
\\
0 & -\beta_1\theta_2 & 0 & -\mu_V & -\mu _V\epsilon_2 &0 &0
\\ 
 0 & \beta_1\theta_2&0&0 &\mu_V\epsilon_2-\mu_V & 0&0
\\
0&-\kappa_2\beta_2\theta_2 &0&0&0&-\mu_U & -\mu_U \epsilon_3
\\
0 & \kappa_2\beta_2\theta_2 & 0 &0&0&0 &\mu_U \epsilon_3-\mu_U
\end{array}}\right]. \]
 It is clear that, the system has  four negative eigenvalues $\lambda_{1,2,3,4}=-\mu_H,-\mu_H,-\mu_V,-\mu_U.$
 The remaining eigenvalues can be found from the characteristic equation  $k(\lambda)=0,$ where the  $k(\lambda)$ given by
\begin{equation}\label{ce}
 k(\lambda)= \lambda^3+ k_1 \lambda^2+k_2 \lambda +k_3,  
 \end{equation}
with
$$\begin{array}{ll}
 k_1&= \mu_U(1-\epsilon_3)+\mu_V(1-\epsilon_2)+ \left[ \mu_H(1-\epsilon_1)+\gamma\right] (1-R_{HH}),\\
 k_2 &= [\gamma+\mu_H(1-\epsilon_1)]\lbrace \mu_U(1-\epsilon_3)(1-[R_{HH}+R_{HU}])+\mu_V(1-\epsilon_2)(1-[R_{HH}+R_{HV}])\rbrace \\&+\mu_V\mu_U(1-\epsilon_2)(1-\epsilon_3), \\
 k_3& =\mu_V\mu_U(1-\epsilon_2)(1-\epsilon_3)[\gamma+\mu_H(1-\epsilon_1)](1-[R_{HH}+R_{HV}+R_{HU}]).  \end{array}$$

It is clear that $k_3>0$ if $R_{HH}+R_{HV}+R_{HU}<1$ which also implies  that $k_1>0$ and $k_2>0.$
Hence, in order to use Routh's stability criterion \cite{routh} to show that the roots of the above characteristic equation have negative real part, it is left to show the $k_1k_2-k_3$ is positive. This can be shown as follows: \\
 $
 k_1k_2-k_3=$\\
 $\mu_U^2(1-\epsilon_3)^2[\gamma+\mu_H(1-\epsilon_1)][1-(R_{HH}+R_{HU})]+\mu_V^2(1-\epsilon_2)^2 [\gamma+\mu_H(1-\epsilon_1)][1-(R_{HH}+R_{HV})]+\mu_U(1-\epsilon_3)[\gamma+\mu_H(1-\epsilon_1)]^2(1-R_{HH})(1-R_{HU})+\mu_V(1-\epsilon_2)[\gamma+\mu_H(1-\epsilon_1)]^2(1-R_{HH})[1-(R_{HH}+R_{HV})]+2\mu_V\mu_U(1-\epsilon_2)(1-\epsilon_3)[\gamma+\mu_H(1-\epsilon_1)](1-R_{HH})+\mu_U^2\mu_V(1-\epsilon_2)(1-\epsilon_3)^2+\mu_U\mu_V^2(1-\epsilon_2)^2(1-\epsilon_3).  
$\\
 Clearly, $k_1k_2-k_3 >0$  if and only if  $\quad R_{HH}+R_{HV}<1  \text{ and } R_{HH}+R_{HU}<1. $ 
 \\
Therefore, by Routh's stability criterion, the roots of  the characteristic equation $k(\lambda)=0 $ have negative real part, and hence we conclude that the  DFE is locally asymptotically stable whenever $ R_0\leq 1. $ Otherwise, it is unstable.

\end{proof}

\subsection{Global Stability of the DFE}
 When the solution of the dynamical system (\ref{2}) approaches a unique equilibrium point regardless of initial conditions then the equilibrium point is globally asymptotically stable. The global stability of the DFE will  ensure that the disease is eliminated under all initial conditions. In this regard, we state the following theorem:
\begin{theorem}
\label{gs}
 If $ R_0\leq 1, $ the disease free equilibrium $ Z^0 $ is globally asymptotically stable  on the compact set $ \Omega. $
 \end{theorem}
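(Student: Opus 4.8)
The plan is to construct a Lyapunov function on the compact positively invariant set $\Omega$ and apply LaSalle's invariance principle. Since the infected compartments are $I_H$, $I_V$, $I_U$, I would look for a function of the form $L = a_1 I_H + a_2 I_V + a_3 I_U$ with positive constants $a_i$ to be chosen, exploiting the structure already exposed by the next generation matrix. On $\Omega$ we have $S_H, S_V, S_U \le 1$, so the infection terms satisfy $\beta_1\theta_1\alpha_1 I_V S_H \le \beta_1\theta_1\alpha_1 I_V$, and similarly for the other bilinear terms; this lets me bound $L'$ along trajectories of (\ref{2}) by a linear expression in $(I_H, I_V, I_U)$. Concretely, differentiating and using the $S$-bounds,
\begin{align*}
L' &\le a_1\big[(\mu_H\epsilon_1 - \gamma - \mu_H) I_H + \beta_1\theta_1\alpha_1 I_V + \kappa_1\beta_2\theta_1\alpha_2 I_U + \lambda I_H\big] \\
&\quad + a_2\big[\beta_1\theta_2 I_H - \mu_V(1-\epsilon_2) I_V\big] + a_3\big[\beta_2\theta_2\kappa_2 I_H - \mu_U(1-\epsilon_3) I_U\big].
\end{align*}
The idea is to pick $a_1 = 1$, $a_2 = \dfrac{\alpha_1\beta_1\theta_1}{\mu_V(1-\epsilon_2)}$, $a_3 = \dfrac{\kappa_1\alpha_2\beta_2\theta_1}{\mu_U(1-\epsilon_3)}$, so that the coefficients of $I_V$ and $I_U$ in $L'$ vanish identically. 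The coefficient of $I_H$ then collapses to $-[\gamma+\mu_H(1-\epsilon_1)]\big(1 - R_{HH} - R_{HV} - R_{HU}\big)$, using the explicit forms of $R_{HH}$, $R_{HV}$, $R_{HU}$ recorded after (\ref{10}). Since $R_0 \le 1 \Leftrightarrow R_{HH}+R_{HV}+R_{HU} \le 1$ (noted in the text), we get $L' \le 0$ on $\Omega$.

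It remains to identify the largest invariant set where $L' = 0$. If $R_{HH}+R_{HV}+R_{HU} < 1$ strictly, then $L' = 0$ forces $I_H = 0$, and then from the $I_V$ and $I_U$ equations $I_V \to 0$ and $I_U \to 0$ on the invariant set, whence $S_H \to 1$, $S_V \to 1$, $S_U \to 1$ and $R_H \to 0$ by the remaining equations; so the only invariant subset is $\{Z^0\}$. In the boundary case $R_0 = 1$ one checks that $L' = 0$ still requires $I_H = 0$ (the $I_H$-coefficient is zero but the $S_H \le 1$ bound is strict unless $S_H = 1$, which forces $I_H = I_V = I_U = 0$ anyway via the $S_H$ equation at equilibrium); the largest invariant set is again $\{Z^0\}$. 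By LaSalle's invariance principle, every solution starting in $\Omega$ converges to $Z^0$, and combined with Theorem \ref{ls} (local asymptotic stability when $R_0 \le 1$) this gives global asymptotic stability.

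The main obstacle I anticipate is the bookkeeping in the $R_0 = 1$ boundary case: there the linear Lyapunov estimate is tight and $L'$ can vanish on a larger set than a single point unless one is careful to track the strictness of the $S_H \le 1$ inequality (and similarly $S_V, S_U \le 1$) in the bilinear terms. One clean way around this is to argue the strict-inequality case via the linear function above and handle $R_0 = 1$ by noting that on any invariant set contained in $\{L' = 0\}$ one must have $I_H \equiv 0$, then propagate. A secondary technical point is ensuring the chosen $a_2, a_3$ are well-defined and positive, which holds because $\epsilon_2, \epsilon_3 < 1$ and all rates are positive; I would state this explicitly. If the linear Lyapunov function turns out to give trouble at $R_0=1$, the fallback is a matrix-theoretic argument: write the infected subsystem as $x' \le (P - Q)x$ componentwise on $\Omega$, use that $s(P-Q) \le 0 \Leftrightarrow \rho(PQ^{-1}) \le 1$ for the Metzler matrix $P - Q$, and apply a comparison theorem.
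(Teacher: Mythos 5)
Your proof is essentially correct, but it follows a genuinely different route from the paper. The paper invokes the Castillo--Chavez theorem: it splits the state into uninfected variables $X=(S_H,R_H,S_V,S_U)$ and infected variables $Y=(I_H,I_V,I_U)$, verifies that $X^0$ is globally asymptotically stable for the reduced system $dX/dt=F(X,0)$, and then checks that $\hat G(X,Y)=AY-G(X,Y)\geq 0$ on $\Omega$, where $A=D_YG(X^0,0)$ is Metzler; the nonnegativity of $\hat G$ there is exactly the observation $S_H,S_V,S_U\leq 1$ that you use to bound the bilinear terms. You instead build an explicit linear Lyapunov function $L=I_H+\frac{\alpha_1\beta_1\theta_1}{\mu_V(1-\epsilon_2)}I_V+\frac{\kappa_1\alpha_2\beta_2\theta_1}{\mu_U(1-\epsilon_3)}I_U$, and your algebra checks out: the $I_V$ and $I_U$ coefficients cancel and the $I_H$ coefficient equals $-[\gamma+\mu_H(1-\epsilon_1)]\left(1-R_{HH}-R_{HV}-R_{HU}\right)$, with $R_0\leq 1\Leftrightarrow R_{HH}+R_{HV}+R_{HU}\leq 1$ following from the explicit formula for $R_0$. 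What each approach buys: the paper's argument outsources the limiting/invariance bookkeeping to the cited theorem and needs no choice of weights, while yours is self-contained, makes transparent exactly where $R_0\leq 1$ enters, and identifies the threshold quantity directly; the price is the LaSalle step, especially at $R_0=1$, which you correctly flag as the delicate point. One small polish there: on the set $\{L'=0\}$ with $R_0=1$ you can have $I_H>0$ only if $S_H=S_V=S_U=1$, and the way to rule this out is invariance rather than ``at equilibrium'' --- on any invariant subset, $S_V\equiv 1$ forces $S_V'=-\mu_V\epsilon_2 I_V-\beta_1\theta_2 I_H=0$, hence $I_H=0$, after which $I_V,I_U$ decay and backward invariance plus boundedness pins the set to $\{Z^0\}$; also note that LaSalle yields attractivity, and for full global asymptotic stability you should cite Lyapunov stability of $Z^0$ (which your $L$ itself provides, or the paper's Theorem \ref{ls}) rather than relying on local asymptotic stability alone at the boundary value $R_0=1$.
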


\begin{proof}

 Applying Castillo-Chavez theorem \cite{gas}, consider the following two compartments:

 \[ X(t)= \left[ {\begin{array}{c}
S_H(t)
\\
R_H(t)
\\
 S_V(t)
 \\
 S_U(t)
\end{array}}\right], \qquad
 Y(t)= \left[ {\begin{array}{c}
I_H(t)
\\
 I_V(t)
 \\
 I_U(t)
\end{array}}\right] \]
which describe the uninfected and infected individuals of  system (\ref{2}), respectively.
 So that system (\ref{2}) can be written as
 
 $$ \dfrac{dX}{dt}= F(X,Y),\quad \quad \dfrac{dY}{dt}= G(X,Y);\hspace{0.3cm} G(X,0)=0,     $$
 \\
where $F(X,Y)$ and $G(X,Y)$ are the corresponding right hand side of system (\ref{2}). To guarantee the global asymptotic stability of the DFE,  according to Castillo-Chavez theorem, the following two conditions must be satisfied:
\\
\\{\bf{$(H1)$}} \hspace{0.1cm}  For $\dfrac{dX}{dt}= F(X,0),$ $ X^0= (1,0,1,1)^T $  is globally asymptotically stable.
\\{\bf{$(H2)$}} $\hat{G}\geqslant 0,$ where $ \hat{G}(X,Y)= AY - G(X,Y) $ and  $ A=D_YG(X^0,0) $ is an Metzler matrix $ \forall  (X,Y)\in \Omega. $
\\
To check the first condition, we find
 \[F(X,0)= \left[ {\begin{array}{c}
-\mu_H S_H+\mu_H 
\\
-\mu_H R_H
\\
 -\mu_V S_V+\mu_V
\\
-\mu_U S_U+\mu_U
\end{array}}\right]. \]
Solving the system of  ODEs in $(H1)$, we  obtain the following behavior of each component:

$$ S_H(t) = 1+  S_H(0)e^{-\mu_H t} \Rightarrow \lim_{t \to \infty}S_H(t)=1,  $$
$$ R_H(t) = R_H(0) e^{-\mu_H t} \Rightarrow \lim_{t \to \infty}R_H(t)=0,  $$ 
$$ S_V(t) = 1+S_V(0) e^{-\mu_V t} \Rightarrow \lim_{t \to \infty}S_V(t)=1,  $$ 
$$ S_U(t) = 1+ S_U(0) e^{-\mu_U t} \Rightarrow \lim_{t \to \infty}S_U(t)=1.   $$ 
Hence,  the first condition is satisfied. Now, to check the second condition, we first find

\[A = \left[ {\begin{array}{ccc}
\mu_H\epsilon_1+\lambda-\gamma-\mu_H & \beta_1\theta_1\alpha_1 &\kappa_1 \beta_2\theta_1\alpha_2
\\
\beta_1\theta_2 & \mu_V \epsilon_2-\mu_V & 0 
\\
\kappa_2\beta_2\theta_2 & 0 & \mu_U\epsilon_3  -\mu_U 
\end{array}}\right], \]
then, $ \hat{G}(X,Y)= AY - G(X,Y) $: 

\[\hat{G} = \left[ {\begin{array}{c}
(\alpha_1\beta_1\theta_1I_V+\kappa_1\alpha_2\beta_2\theta_1I_U+\lambda I_H)(1-S_H)
\\
\beta_1\theta_2 I_H(1-S_V)
\\
\kappa_2\beta_2\theta_2 I_H(1-S_U)
\end{array}}\right]. \]
Since $ 0 \leqslant S_H \leqslant 1 $, 		 $ 0 \leqslant S_V \leqslant 1 $ and  $ 0 \leqslant S_U \leqslant 1 $ then $\hat{G}\geqslant 0$ for all $ (X,Y)\in \Omega $. Thus,  $ Z^0 $ is globally asymptotically stable provided that $R_0\leq 1$.
\end{proof} 

\subsection{Existence of Endemic Equilibrium }
The existence of endemic equilibrium is the state  where the infection cannot be totally eradicated and  the disease progration persists  in a population at all times but in relatively low frequency. Here, we discuss the existence of endemic equilibrium.

\begin{theorem}
\label{ees}
 For model (\ref{2}) there exists an endemic equilibrium $ Z^*\in\Omega$  whenever $R_0 > 1.$
 \end{theorem}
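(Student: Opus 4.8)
The plan is to use the three conservation laws to collapse the seven equilibrium equations into a single scalar equation in $I_H$, and then to locate a root in the biologically admissible range by the intermediate value theorem. First I would eliminate $R_H,S_V,S_U$ via $R_H=1-S_H-I_H$, $S_V=1-I_V$, $S_U=1-I_U$. Setting $I_V'=0$ and $I_U'=0$ and solving gives
\[ I_V=\frac{\beta_1\theta_2 I_H}{\mu_V(1-\epsilon_2)+\beta_1\theta_2 I_H},\qquad I_U=\frac{\kappa_2\beta_2\theta_2 I_H}{\mu_U(1-\epsilon_3)+\kappa_2\beta_2\theta_2 I_H}, \]
which are well defined, increasing in $I_H$, and lie in $[0,1)$ for $I_H\geq 0$. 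Adding the $S_H'$ and $I_H'$ equations the infection and vertical-transmission terms cancel, so $S_H'+I_H'=0$ forces the linear relation $S_H=1-\tfrac{\gamma+\mu_H}{\mu_H}I_H$; in particular $S_H\geq 0$ exactly on the interval $J:=\bigl[0,\tfrac{\mu_H}{\gamma+\mu_H}\bigr]$, and from $R_H'=0$ one gets $R_H=\tfrac{\gamma}{\mu_H}I_H$, consistent with $S_H+I_H+R_H=1$.

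Next I would substitute these expressions into $I_H'=0$ and cancel the common factor $I_H$ (legitimate at an endemic equilibrium, where $I_H>0$), obtaining a single equation $f(I_H)=0$ with
\[ f(x)=\mu_H\epsilon_1-(\gamma+\mu_H)+\left(\lambda+\frac{\alpha_1\beta_1^2\theta_1\theta_2}{\mu_V(1-\epsilon_2)+\beta_1\theta_2 x}+\frac{\kappa_1\kappa_2\alpha_2\beta_2^2\theta_1\theta_2}{\mu_U(1-\epsilon_3)+\kappa_2\beta_2\theta_2 x}\right)\left(1-\frac{\gamma+\mu_H}{\mu_H}x\right). \]
I would then evaluate $f$ at the endpoints of $J$. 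Using the identity $\mu_H\epsilon_1-(\gamma+\mu_H)=-[\gamma+\mu_H(1-\epsilon_1)]$ and the definitions of $R_{HH},R_{HV},R_{HU}$ from Section~3.2, a short computation gives the clean formula
\[ f(0)=\bigl[\gamma+\mu_H(1-\epsilon_1)\bigr]\,\bigl(R_{HH}+R_{HV}+R_{HU}-1\bigr), \]
which is strictly positive precisely when $R_0>1$, since it was shown that $R_0>1\Leftrightarrow R_{HH}+R_{HV}+R_{HU}>1$. At the right endpoint $x=\tfrac{\mu_H}{\gamma+\mu_H}$ the linear factor vanishes, so $f=\mu_H\epsilon_1-(\gamma+\mu_H)=-[\gamma+\mu_H(1-\epsilon_1)]<0$. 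As $f$ is continuous on $J$, the intermediate value theorem produces $I_H^\ast\in\bigl(0,\tfrac{\mu_H}{\gamma+\mu_H}\bigr)$ with $f(I_H^\ast)=0$.

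Finally I would back-substitute $I_H^\ast$ to recover $S_H^\ast=1-\tfrac{\gamma+\mu_H}{\mu_H}I_H^\ast\in(0,1)$, $R_H^\ast=\tfrac{\gamma}{\mu_H}I_H^\ast\in(0,1)$, $I_V^\ast,I_U^\ast\in(0,1)$, $S_V^\ast=1-I_V^\ast$, $S_U^\ast=1-I_U^\ast$, check $S_H^\ast+I_H^\ast+R_H^\ast=1$, and thereby conclude that $Z^\ast=(S_H^\ast,I_H^\ast,R_H^\ast,S_V^\ast,I_V^\ast,S_U^\ast,I_U^\ast)\in\Omega$ is a genuine endemic equilibrium. (Each rational term and the linear factor are positive and strictly decreasing on $J$, so $f$ is strictly decreasing there and $Z^\ast$ is in fact unique, although only existence is asserted.) The main obstacle is purely bookkeeping: carrying out the substitution into $I_H'=0$ and verifying that $f(0)$ collapses exactly to the $R_0$-threshold expression above; once that algebraic identity is in hand, the conclusion is just the intermediate value theorem together with the containment check in $\Omega$.
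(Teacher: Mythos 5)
Your proposal is correct, and it follows the same elimination strategy as the paper — expressing $S_H^*,R_H^*,S_V^*,I_V^*,S_U^*,I_U^*$ in terms of $I_H^*$ (your formulas agree exactly with the paper's) and reducing to one scalar equation in $I_H^*$ — but it closes the argument differently. The paper clears denominators to obtain a quartic, factors out the DFE root $I_H^*=0$, and deduces a positive root of the resulting cubic from $q_1>0$ and $q_4<0$ when $R_{HH}+R_{HV}+R_{HU}>1$, supplemented by a Descartes-rule discussion that only partially settles uniqueness. You instead keep the reduced equation in rational form, cancel the factor $I_H$, and apply the intermediate value theorem on $\bigl[0,\mu_H/(\gamma+\mu_H)\bigr]$, using the identity $f(0)=[\gamma+\mu_H(1-\epsilon_1)](R_{HH}+R_{HV}+R_{HU}-1)>0$ (equivalent to $R_0>1$ by the explicit formula for $R_0$, consistent with the paper's threshold statement) together with $f\bigl(\mu_H/(\gamma+\mu_H)\bigr)=-[\gamma+\mu_H(1-\epsilon_1)]<0$. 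This buys two things the paper leaves implicit: the root automatically lies in the interval where $S_H^*\geq 0$, so the containment $Z^*\in\Omega$ stated in the theorem is immediate rather than unchecked (it does hold for the paper's root too, since $f<0$ beyond the right endpoint, but the paper never verifies it); and the strict monotonicity of $f$ on that interval gives uniqueness of the endemic equilibrium outright, which is cleaner and stronger than the paper's coefficient-sign case analysis — indeed it shows the "three positive roots" scenario the paper entertains cannot actually occur. The only bookkeeping you flagged (collapsing $f(0)$ to the threshold expression) checks out against the definitions of $R_{HH}$, $R_{HV}$, $R_{HU}$ in Section 3.2.
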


\begin{proof}

 Let $Z^*:=(S_H^*,I_H^*,R_H^*,S_V^*,I_V^*,S_r^*, I_U^*) $ be the endemic equilibrium of the model (\ref{2}) such that 

$$ S_H^*= \dfrac{\mu_H-(\mu_H+\gamma)I_H^*}{\mu_H}  , \hspace{1.0cm} R_H^* = \dfrac{\gamma I_H^*}{\mu_H},$$

$$S_V^*=\dfrac{\mu_V(1-\epsilon_2)}{\mu_V(1-\epsilon_2)+\beta_1\theta_2 I_H^*},\hspace{1.0cm} I_V^* = \dfrac{\beta_1\theta_2 I_H^*}{\mu_V(1-\epsilon_2)+\beta_1\theta_2 I_H^*}, $$

$$ S_U^*= \dfrac{\mu_U(1-\epsilon_3)}{\mu_U(1-\epsilon_3)+\kappa_2\beta_2\theta_2 I_H^*},\hspace{1.0cm}I_U^* = \dfrac{\kappa_2\beta_2\theta_2 I_H^*}{\mu_U(1-\epsilon_3)+\kappa_2\beta_2\theta_2 I_H^*},$$
\\
and $ I_H^* $ satisfies the following equation:
\\
$$   q_1 I_H^{*4} + q_2 I_H^{*3} + q_3 I_H^{*2}+q_4 I_H^* =0, $$

 where 
 $$\begin{array}{lll}
 
 q_1 & = \beta_1\beta_2\lambda\theta_2^2 \kappa_2(\mu_H+\gamma),
 \\
 &\\
  q_2& = \beta_1\beta_2\kappa_2\theta_2^2\mu_H[\gamma+\mu_H(1-\epsilon_1)][1-R_{HH}]+\beta_2\kappa_2\theta_2\mu_V
 (\mu_H+\gamma)(1-\epsilon_2)
 \\&
 [\gamma +\mu_H(1-\epsilon_1)][R_{HH}+R_{HV}]
 +\beta_1\theta_2\mu_U(1-\epsilon_3)(\mu_H+\gamma)[\gamma+\mu_H(1-\epsilon_1)][R_{HH}+R_{HU}],
 
 \\
 &\\
 q_3 & =\beta_2\kappa_2\theta_2\mu_H \mu_V(1-\epsilon_2)[\gamma+\mu_H(1-\epsilon_1)][1-R_{HH}-R_{HV}] +\beta_1\theta_2\mu_H\mu_U(1-\epsilon_3)\\&[\gamma+\mu_H(1-\epsilon_1)][1-R_{HH}-R_{HU}]
  
 +\mu_V\mu_U(1-\epsilon_2)(1-\epsilon_3)[\gamma+\mu_H(1-\epsilon_1)]\\&(\mu_H+\gamma)[R_{HH}+R_{HV}+R_{HU}],
 \\
 &\\
 q_4 & = \mu_H\mu_V\mu_U(1-\epsilon_2)(1-\epsilon_3)[\gamma+\mu_H(1-\epsilon_1)][1-(R_{HH}+R_{HV}+R_{HU})].  \end{array}$$
  \\
 Solving the above equation, we get $ I_H^*=0 $ which corresponds  to the  DFE $(Z^0 )$ and the remaining roots satisfy the cubic equation :
 \begin{equation}\label{6}
 q_1 I_H^{*3} + q_2 I_H^{*2} + q_3 I_H^{*}+q_4=0.
 \end{equation}
 
 Clearly, if $ R_{HH}+R_{HV}+R_{HU}>1$, then the above equation has a positive root since  $ q_1>0$ and $ q_4<0 $. 
Now, note that $q_3$ can be written   in terms of $q_2$ as follows:
$$q_3=p-\dfrac{\mu_H}{\mu_H+\gamma}\left( q_2-\beta_1\beta_2 \kappa_2\theta_2^2 \mu_H(\gamma+\mu_H(1-\epsilon_1)[1-R_{HH}]\right), $$
where $$\begin{array}{ll}
p&=[\gamma+\mu_H(1-\epsilon_1)]\left( \mu_H\theta_2( 
\beta_2\kappa_2 \mu_V(1-\epsilon_2)+\beta_1\mu_U(1-\epsilon_3)) \right. \\&\left. +\mu_V\mu_U(1-\epsilon_2)(1-\epsilon_3)(\mu_H+\gamma)[R_{HH}+R_{HV}+R_{HU}]\right) .
\end{array}$$
To ensure the uniqueness of the positive roots, we apply the Descartes's Sign Rule \cite{dsr}. There exits a unique positive  root when $q_2>0$ regardless the sign of $q_3$ and this happens if $R_{HH}<1$ and $R_{HH}+R_{HU}+R_{HV}>1.$ However,  when $q_2<0$ and  $R_{HH}+R_{HU}+R_{HV}>1$
there exist at least one positive root.
Note that the existence of three positive roots is only possible when $ q_2<0 $ and $q_3>0 $. 
 \end{proof}

\subsection{Bifurcation Analysis}
When stability of  a system is changed as a parameter  changes  causing emergence  or disappearance of new stable points, then the system is said to undergo bifurcation. In this section, we prove that system (\ref{2}) has transcritical bifurcation. The proof is based on Sotomayor theorem described in \cite{soto}. Let $ F$ be defined as the right hand side of system (\ref{2})
 and $$ Z= (S_H,I_H,R_H,S_V,I_V,S_U,I_U)^T.$$
At $ R_0=1, $ we can check that the constant term of  characteristic equation of  $  J_{Z^0}$ is zero which implies that $  J_{Z^0}$ has a simple zero eigenvalue. Here, we choose $\lambda $ as a bifurcation parameter such that the bifurcation value corresponding to $R_0=1$ is given by
$$\lambda^*= [\mu_H(1-\epsilon_1)+\gamma][1-(R_{HV}+R_{HU  })]. $$ 
Introducing the following:
\\
 $$ a_1= -\mu_H, \quad a_2= -(\mu_H \epsilon_1+\lambda), \quad a_3=-\alpha_1\beta_1\theta_1,\quad a_4= -\kappa_1\alpha_2\beta_2\theta_1,  $$
$$b_1=-[\mu_H (1-\epsilon_1)+\lambda +\gamma], \quad b_2=   \alpha_1\beta_1\theta_1, \quad b_3=\kappa_1\alpha_2\beta_2\theta_1,\quad c_1=  \gamma,\quad c_2=  -\mu_H,  $$
$$ d_1= -\beta_1\theta_2, \quad d_2=-\mu_V, \quad d_3= -\mu _V\epsilon_2,\quad e_1= \beta_1\theta_2,\quad e_2= -\mu_V(1-\epsilon_2),   $$
$$  f_1=-\kappa_2\beta_2\theta_2 , \quad f_2=-\mu_U, \quad f_3=-\mu_U \epsilon_3,\quad g_1=\kappa_2\beta_2\theta_2,\quad g_2=-\mu_U (1-\epsilon_3),     $$
\\
and solving $ J_{(Z^0,\lambda^*)} $v $=0$, where v $ =(v_1,v_2,v_3,v_4,v_5,v_6,v_7)^T $ is a nonzero right eigenvector of $ J_{(Z^0,\lambda^*)}$ corresponding to the zero eigenvalue, we obtain 
 \\
 \[\text{v} = \left[ {\begin{array}{c}
\dfrac{1}{a_1}(-a_2+\dfrac{a_3e_1}{e_2}+\dfrac{a_4g_1}{g_2})
\\
1
\\
\dfrac{-c_1}{c_2}
\\ 
\dfrac{1}{d_2}(\dfrac{d_3e_1}{e_2}-d_1)
\\
-\dfrac{e_1}{e_2}
\\
(\dfrac{f_3g_1}{g_2}-f_1)
\\
\dfrac{-g_1}{g_2}
\end{array}}\right] v_2, \quad  v_2\neq 0. \]
Next, we find the corresponding nonzero  left eigenvector w $=(w_1,w_2,w_3,w_4,w_5,w_6,w_7)^T$ which satisfies  $ J_{(Z^0,\lambda^*)}^T$ w $=0. $ We get 
\\
\[\text{w} = \left[ {\begin{array}{c}
0
\\
1
\\
0
\\ 
0
\\
\dfrac{-b_2}{e_2}
\\
0
\\
\dfrac{-b_3}{g_2}
\end{array}}\right] w_2, \quad  w_2\neq 0. \]
\\
Model (\ref{2}) can be written as 
$dZ/dt=F(Z)$, where $F(Z)$ is the right hand side of the model. 
Now, we check the conditions of Sotomayor  theorem and begin with finding  $F_{\lambda}(\lambda^*,Z^0):$

$$ F_{\lambda}(\lambda^*,Z^0)=(0,0,0,0,0,0,0)^T. $$
So, the first condition:
\begin{equation}\label{7}
 {\text{w}}^T F_{\lambda}(\lambda^*,Z^0)=0,            
\end{equation}
 is satisfied. Next, we find the Jacobian of  $ F_{\lambda}(\lambda^*,Z)$ as follows:

\[DF_{\lambda}(\lambda^*,Z) = \left[ {\begin{array}{ccccccc}
-I_H & -S_H & 0 & 0 &0& 0 &0 
\\
I_H & S_H & 0 & 0 &0 & 0 & 0
\\
0 &0& 0 & 0 &0 & 0&0
\\
0 & 0& 0 & 0 &0 &0&0
\\ 
 0 & 0&0&0 &0 & 0&0
\\
0&0 &0&0&0&0& 0
\\
0 & 0 & 0 &0&0&0 &0
\end{array}}\right].\]

Checking the second conditions, we have
\begin{equation}\label{8}
{\text{w}}^T DF_{\lambda}(\lambda^*,Z^0){\text{v}}=w_2 v_2\neq 0.
\end{equation}
Finally, we check the third condition by finding $  D^2F(\lambda^*,Z^0) $ where $ D^2 $ denotes the matrix of the partial derivatives of each components of  $ DF(Z) $  and we get:

\[\begin{array}{l}
D^2F(\lambda^*,Z^0)({\text{v,v}})= \left[ {\begin{array}{c}
-2\lambda v_1v_2-2\alpha_1\beta_1\theta_1v_1v_3-2\kappa_1\alpha_2\beta_2\theta_1
v_1v_7
\\
2\lambda v_1v_2+2\alpha_1\beta_1\theta_1v_1v_5+2\kappa_1\alpha_2\beta_2\theta_1
v_1v_7
\\
-2\beta_1\theta_2v_4v_2
\\ 
2\beta_1\theta_2v_4v_2
\\
-2\beta_2\theta_2\kappa_2 v_6v_2
\\
2\beta_2\theta_2\kappa_2 v_6v_2
\end{array}}\right].\end{array}\]

Thus, $$\begin{array}{l}
{\text{w}}^T[ D^2F(\lambda^*,Z^0)({\text{v,v}})]=\\ \left(  2\lambda v_1v_2+2\alpha_1\beta_1\theta_2v_1v_5+2\kappa_1\alpha_2\beta_2\theta_1
v_1v_7 -2\dfrac{b_2}{e_2}\beta_1\theta_2v_4v_2 -2\dfrac{b_3}{g_2}\beta_2\theta_2\kappa_2 v_6v_2\right)  w_2.
\end{array} $$
\\ 
By substituting the values of v's, we get
\\
$$\begin{array}{ll}
{\text{w}}^T[ D^2F(\lambda^*,Z^0)({\text{v,v}})] & = 2\left[ \dfrac{1}{a_1}\left( -a_2+\dfrac{a_3e_1}{e_2} + \dfrac{a_4g_1}{g_2}\right)\left( \lambda- \alpha_1\beta_1\theta_1\dfrac{e_1}{e_2}-\kappa_1\alpha_2\beta_2\theta_1\frac{g_1}{g_2}\right) \right] w_2v_2^2 \\  
&
-2\left[ \beta_1\theta_1\left( \dfrac{d_3e_1}{e_2} -d_1\right) \dfrac{b_2}{e_2}-  \beta_2\theta_2\kappa_2 \left( \dfrac{f_3g_1}{g_2}-f_1\right) \dfrac{b_3}{g_2} \right] w_2v_2^2,  
\end{array} $$
\\
 which is nonzero since $ w_2 $ and $ v_2 $ are nonzeros.  
Hence, the system (\ref{2}) experiences a transcritical bifurcation at $ Z^0 $ as  the parameter $ \lambda $ passes through the bifurcation value $ \lambda= \lambda^*. $ The bifurcation diagram is graphed using  MATCONT package \cite{matcont} and is illustrated in Figure 
\ref{bif}.  This leads us to establish the following theorem: 
\begin{theorem}
 Model (\ref{2}) undergoes transcritical bifurcation at the DFE$(Z^0)$ when the parameter $\lambda$ passes through the bifurcation value $\lambda = \lambda^*$.
\end{theorem}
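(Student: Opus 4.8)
The plan is to verify the hypotheses of Sotomayor's theorem for transcritical bifurcation, exactly as the computations preceding the statement lay out, and then simply invoke that theorem. Concretely, the strategy has four stages. First, I would establish that at the bifurcation value $\lambda = \lambda^*$ the Jacobian $J_{(Z^0,\lambda^*)}$ has a simple zero eigenvalue; this follows because $R_0 = 1$ forces the constant term $k_3$ of the characteristic polynomial in \eqref{ce} to vanish, while $k_1, k_2 > 0$ (shown in the proof of Theorem \ref{ls}) keep the other two cubic roots in the open left half-plane, and the remaining four eigenvalues $-\mu_H, -\mu_H, -\mu_V, -\mu_U$ are strictly negative. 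Second, I would record the right null vector $\mathrm{v}$ and left null vector $\mathrm{w}$ of $J_{(Z^0,\lambda^*)}$, which are computed explicitly above in terms of the shorthand coefficients $a_i, b_i, \dots, g_i$.

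Third comes the heart of the verification: checking the three transversality/nondegeneracy conditions. Condition (i), $\mathrm{w}^T F_\lambda(\lambda^*, Z^0) = 0$, is immediate since $F_\lambda(\lambda^*,Z^0)$ is the zero vector (the $\lambda$-derivative of the vector field at the DFE vanishes because every $\lambda$-dependent term carries a factor $I_H$, which is zero at $Z^0$). Condition (ii), $\mathrm{w}^T DF_\lambda(\lambda^*,Z^0)\mathrm{v} \neq 0$, reduces after writing out the matrix $DF_\lambda$ to the product $w_2 v_2$, which is nonzero by the choice of eigenvectors. Condition (iii) is the genuinely laborious one: I must compute the bilinear form $D^2 F(\lambda^*, Z^0)(\mathrm{v},\mathrm{v})$ — only the first two components (the $S_H$ and $I_H$ equations) and the vector-infection components contribute quadratic cross-terms — and then contract with $\mathrm{w}$. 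Substituting the components of $\mathrm{v}$ yields the closed-form expression displayed just above the theorem, and the key observation is that it is a nonzero multiple of $w_2 v_2^2$.

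The main obstacle is confirming that this last expression is genuinely nonzero rather than accidentally vanishing. One cannot simply assert it; one should argue that the two bracketed groups cannot cancel. The cleanest route is to note the sign structure: with $e_2 = -\mu_V(1-\epsilon_2) < 0$, $g_2 = -\mu_U(1-\epsilon_3) < 0$, $a_1 = -\mu_H < 0$, $b_2, b_3 > 0$, and $d_1, f_1 < 0$, each term $-b_2 d_3 e_1/e_2^2 + b_2 d_1/e_2$ and its $U$-analogue has a determinate sign, and likewise the factor $\tfrac{1}{a_1}(-a_2 + a_3 e_1/e_2 + a_4 g_1/g_2)$ times $(\lambda^* - \alpha_1\beta_1\theta_1 e_1/e_2 - \kappa_1\alpha_2\beta_2\theta_1 g_1/g_2)$ can be signed; one then checks that the two contributions to $\mathrm{w}^T[D^2F(\lambda^*,Z^0)(\mathrm{v},\mathrm{v})]$ do not have opposite signs that could conspire to cancel. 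Provided all model parameters are strictly positive and $0 < \epsilon_i < 1$, this nondegeneracy holds.

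Once conditions (i)–(iii) are in hand, Sotomayor's theorem \cite{soto} applies verbatim and gives that system \eqref{2} undergoes a transcritical bifurcation at $Z^0$ as $\lambda$ crosses $\lambda^*$ (equivalently, as $R_0$ crosses $1$): the DFE and the endemic branch of Theorem \ref{ees} exchange stability there. I would close by remarking that this is consistent with Theorems \ref{ls}, \ref{gs}, and \ref{ees}, which already show the DFE is (globally) stable for $R_0 \le 1$ and a positive endemic equilibrium emerges for $R_0 > 1$, and I would cite the MATCONT-generated diagram in Figure \ref{bif} as numerical corroboration.
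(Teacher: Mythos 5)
Your proposal follows essentially the same route as the paper: Sotomayor's theorem applied at the simple zero eigenvalue of $J_{(Z^0,\lambda^*)}$ (simplicity guaranteed by $k_3=0$ with $k_1,k_2>0$ at $R_0=1$), with the same right and left null vectors and the same three conditions, of which only the second-order nondegeneracy is nontrivial. Your sign-structure argument for the nonvanishing of $\mathrm{w}^T\left[D^2F(\lambda^*,Z^0)(\mathrm{v},\mathrm{v})\right]$ (all contributing terms share one sign once $v_2,w_2>0$, so no cancellation is possible) is in fact more careful than the paper's bare assertion that the expression is nonzero because $w_2$ and $v_2$ are, so nothing essential is missing.
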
 

\begin{figure}[H]
\includegraphics[height=7cm,width=17cm]{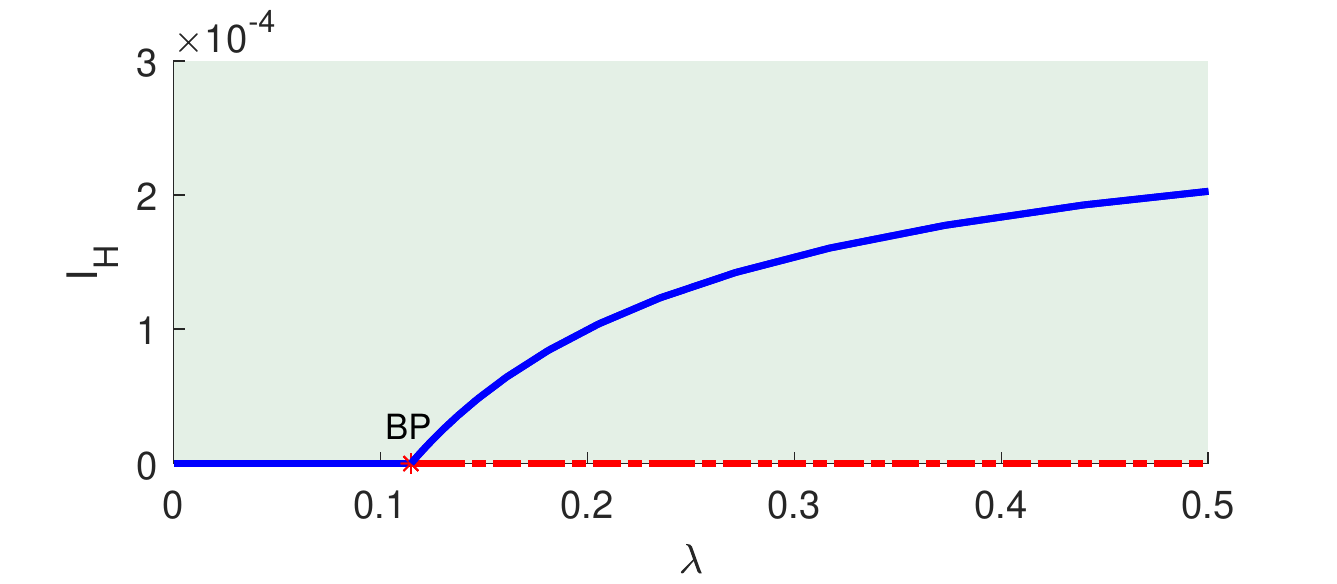}
\caption{Bifurcation figure when $\lambda$ is taken as a bifurcation parameter of system(\ref{2}) with a bifurcation value   $\lambda^* = 0.11469$  at $ R_0=1 $  and 
  by fixing parameter $ \alpha_1 = 2, \alpha_2 = 3,\kappa_1=0.5, \kappa_2 = 0.3,\epsilon_1= 0.67,\epsilon_2 = 0.06,\epsilon_3 =  0.06, \mu_H = 1/(60*365), \mu_V = 1/14, \mu_U= 1/14, \beta_1= 0.1, \beta_2= 0.15, \gamma= 0.16, \theta_1 = 0.33, \theta_2= 0.3. $ }
 \label{bif}
\end{figure} 

{\bf{Remark:}} We can establish the local stability of endemic equilibrium using the above calculations.
 We note that  based on Theorem 4 in \cite{ngm}, $ a $ and $b$ are given by
$$a=\dfrac{1}{2} {\text{w}}^T[ D^2F_{\lambda}(\lambda^*,Z^0)({\text{v,v}})]= \dfrac{1}{2}\sum_{i,j,k=1}^{n}v_i w_j w_k\dfrac{\partial^2 F_i}{\partial x_j \partial x_k }(\lambda^*,Z^0),$$ 
$$b={\text{w}}^T DF_{\lambda}(\lambda^*,Z^0){\text{v}}= \sum_{i,j=1}^{n}v_i w_j \dfrac{\partial^2 F_i}{\partial x_j \partial \lambda}(\lambda^*,Z^0).$$
According to the calculations in this section,
it is clear that  $ b \neq 0$ and $a<0$ if  $w_2$ is  positive.
Thus, there exists $ \delta > 0 $  such that the endemic equilibrium $ Z^* $ is locally asymptotically stable near $ Z^0 $ for $ 0<\lambda<\delta. $
Moreover,  according to Castillo-Chavez and
Song \cite{song} the direction of the bifurcation of the  system (\ref{2}) at $ R_0 = 1$
is forward (supercritical bifurcation).

\section{Numerical Analysis.}

In this section, the forgoing theoretical  results are confirmed by presenting the numerical results of Zika model (\ref{2}). The asymptotic behavior of the model are characterized by solving the system numerically using the baseline parameters as listed in Table \ref{table2} with appropriate initial conditions. We assume  that the human population size to be 200000, the rural mosquito population size 400000 and the forest mosquito population size 600000. These values are obtained by taken into consideration desired conditions or from literature.
The global stability (GAS) of disease  free equilibrium $ Z^0 $  is illustrated in  Figure \ref{dfe}. Moreover, the local stability of  the unique endemic equilibrium  $$ Z^* := (0.197018,0.000231, 0.802751, 0.999579,0.000420, 0.999905, 0.000095)$$  at $ q_2>0, q_3>0$
 is illustrated in Figure \ref{ee}.

\begin{figure}[H]
\subfigure{{\includegraphics[height=4.5cm,width=5.3cm]{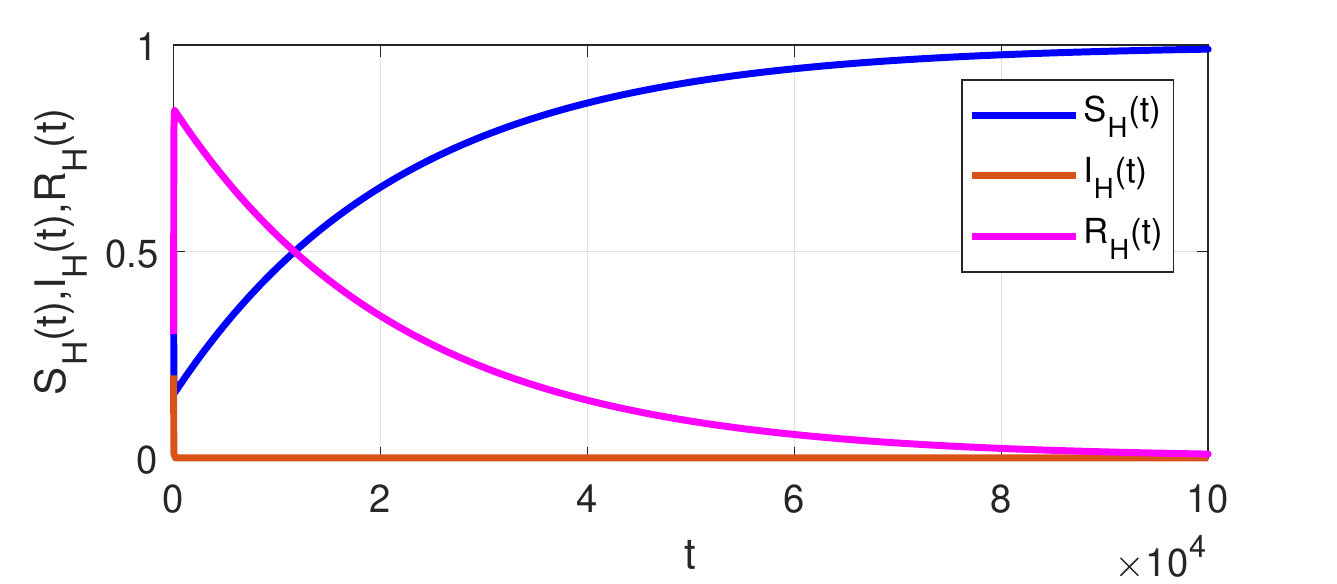}}}
\hfill
\subfigure{{\includegraphics[height=4.5cm,width=5.3cm]{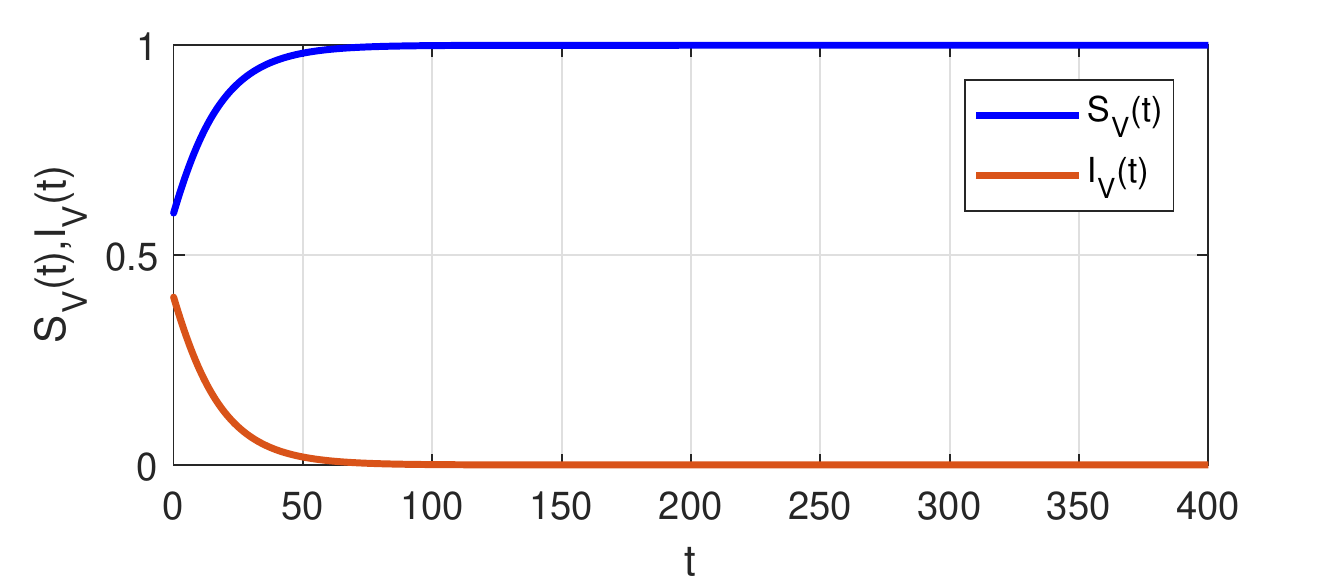}}}
\hfill
\subfigure{{\includegraphics[height=4.5cm,width=5.3cm]{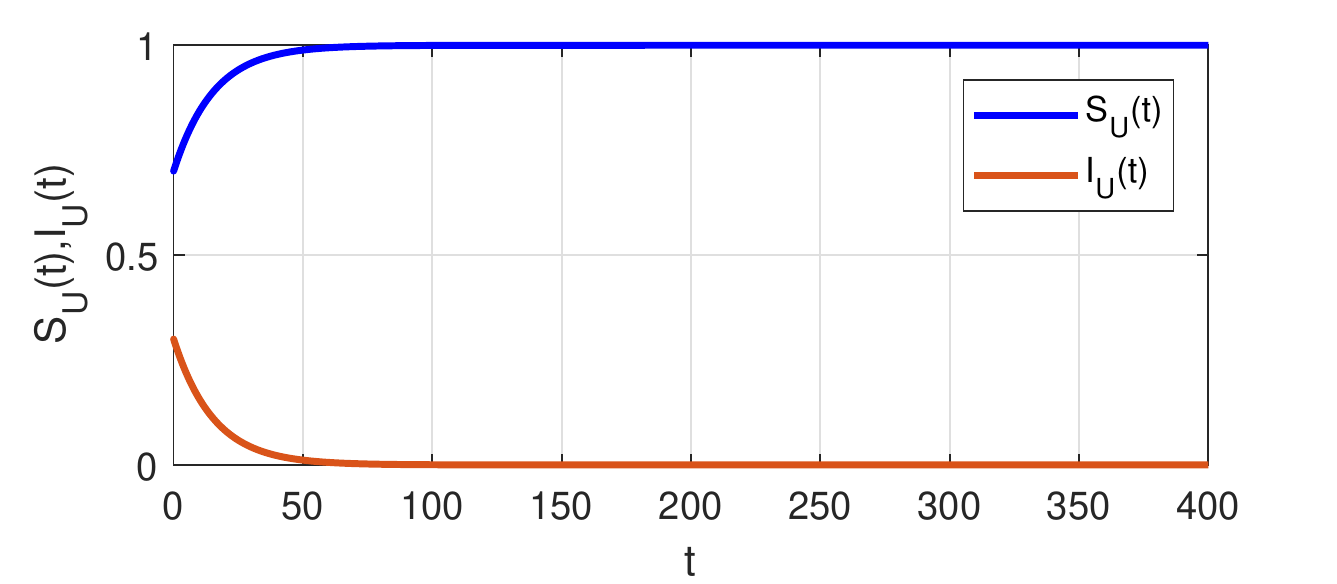}}}
\hfill
\caption{GAS of the disease free equilibrium at $ R_0= 0.87278. $ }
\label{dfe}
\end{figure}

\begin{figure}[H]
\subfigure{{\includegraphics[height=4.5cm,width=5.3cm]{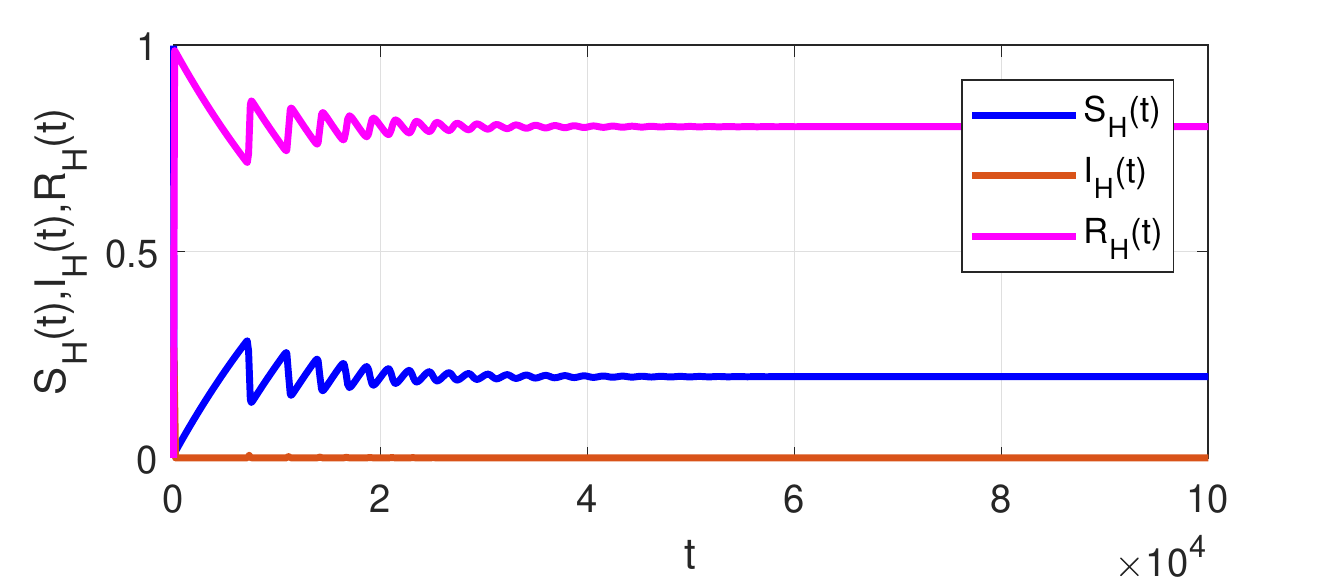}}}
\hfill
\subfigure{{\includegraphics[height=4.5cm,width=5.3cm]{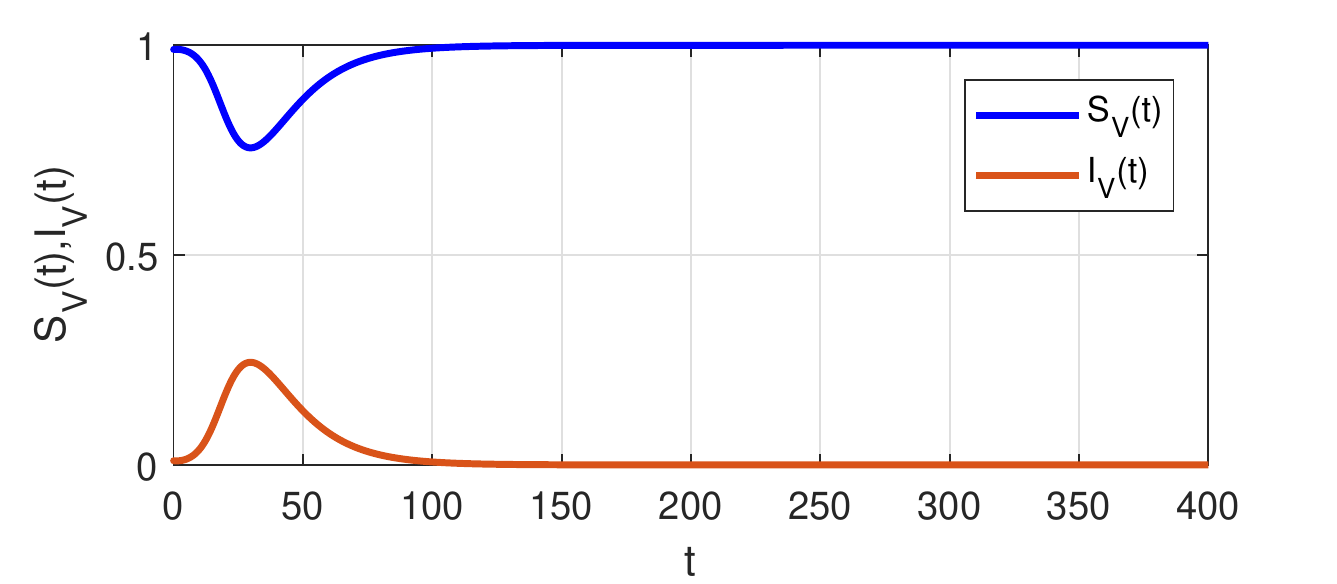}}}
\hfill
\subfigure{{\includegraphics[height=4.5cm,width=5.3cm]{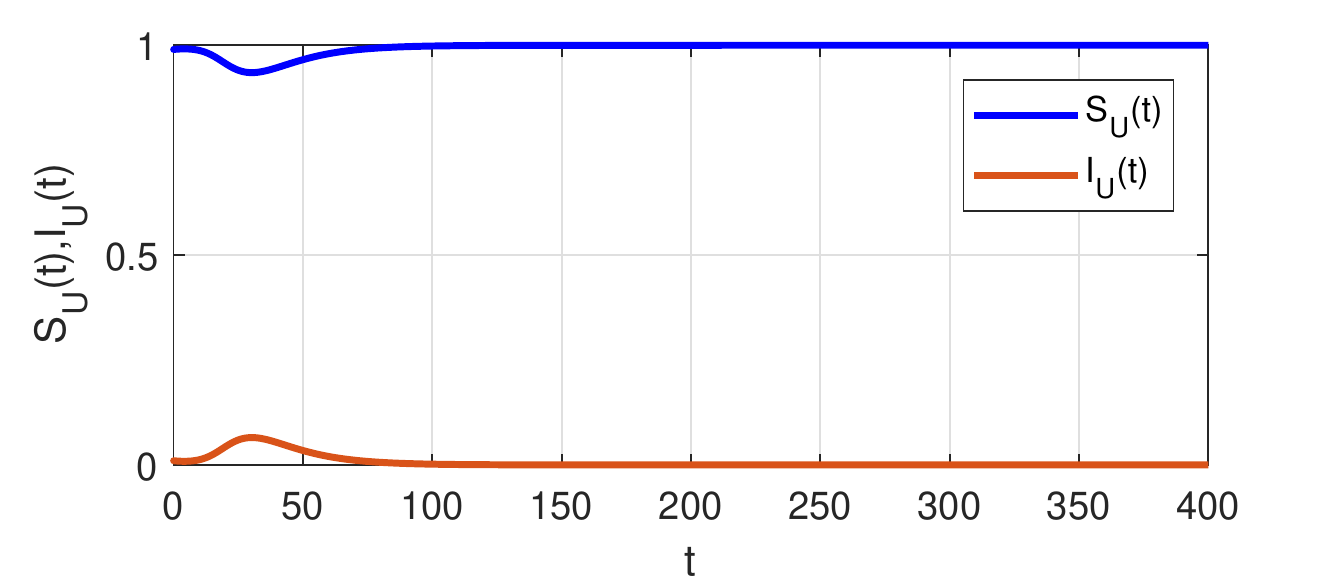}}}
\hfill
\caption{Local stability of endemic equilibrium at $ R_0= 2.8689. $}
\label{ee}
\end{figure}
Now, we discuss the effects  some parameter variations. In Figure   \ref{kappa_1}, we change the fraction of susceptible humans moving to forest area $\kappa_1$ and fix the other parameters. We note  that increasing the values of $\kappa_1$ leads  to  a slight increase in the maximum of both infected human and infected vector in forest areas. The infections reach their maximum and their endemic steady states  slightly more quickly as the movement of susceptible human increases. 

\begin{figure}[H]
\begin{center}
\subfigure{{\includegraphics[width=8.1cm,height=6cm]{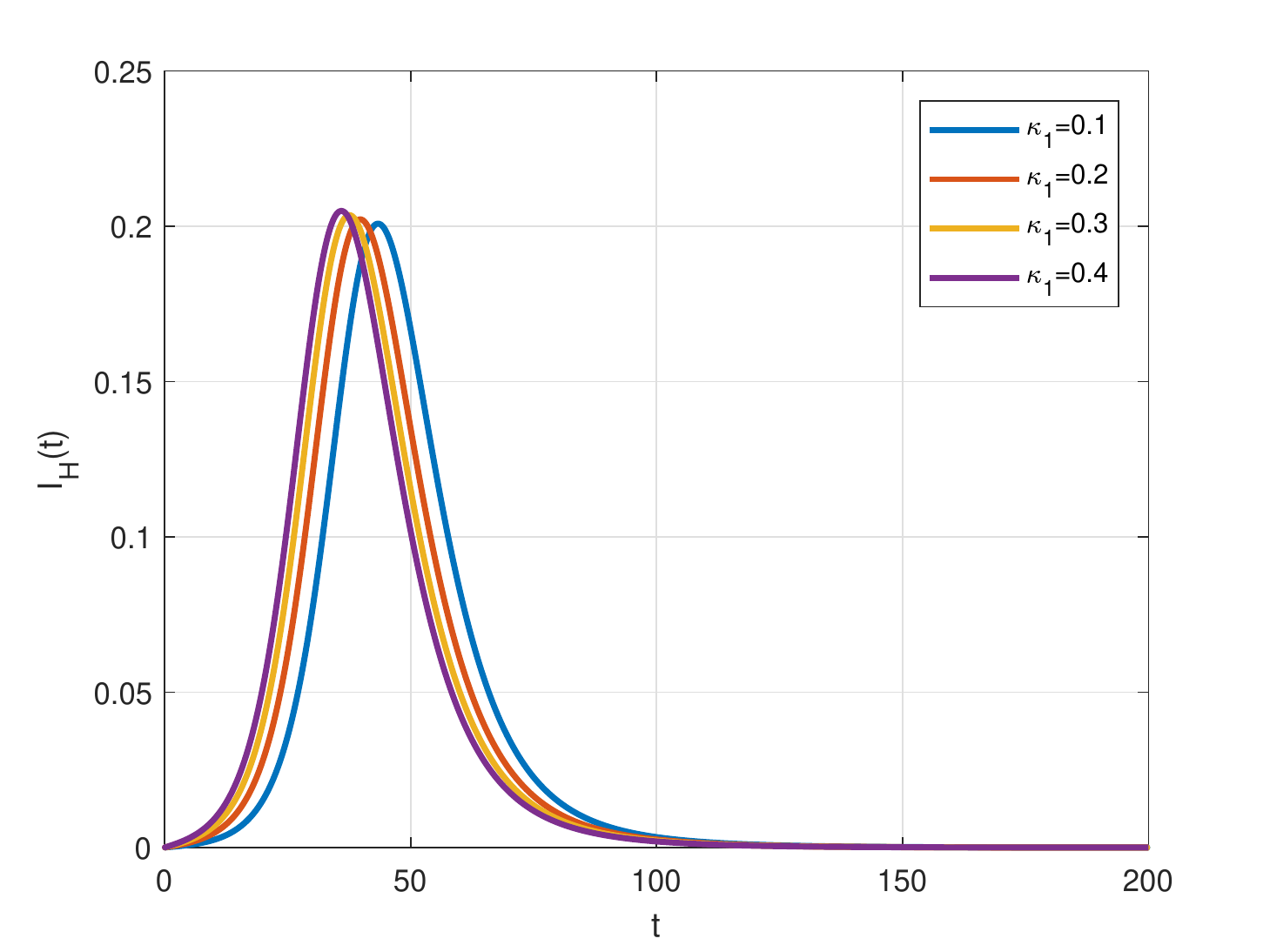}}}
\hfill
\subfigure{{\includegraphics[width=8.1cm,height=6cm]{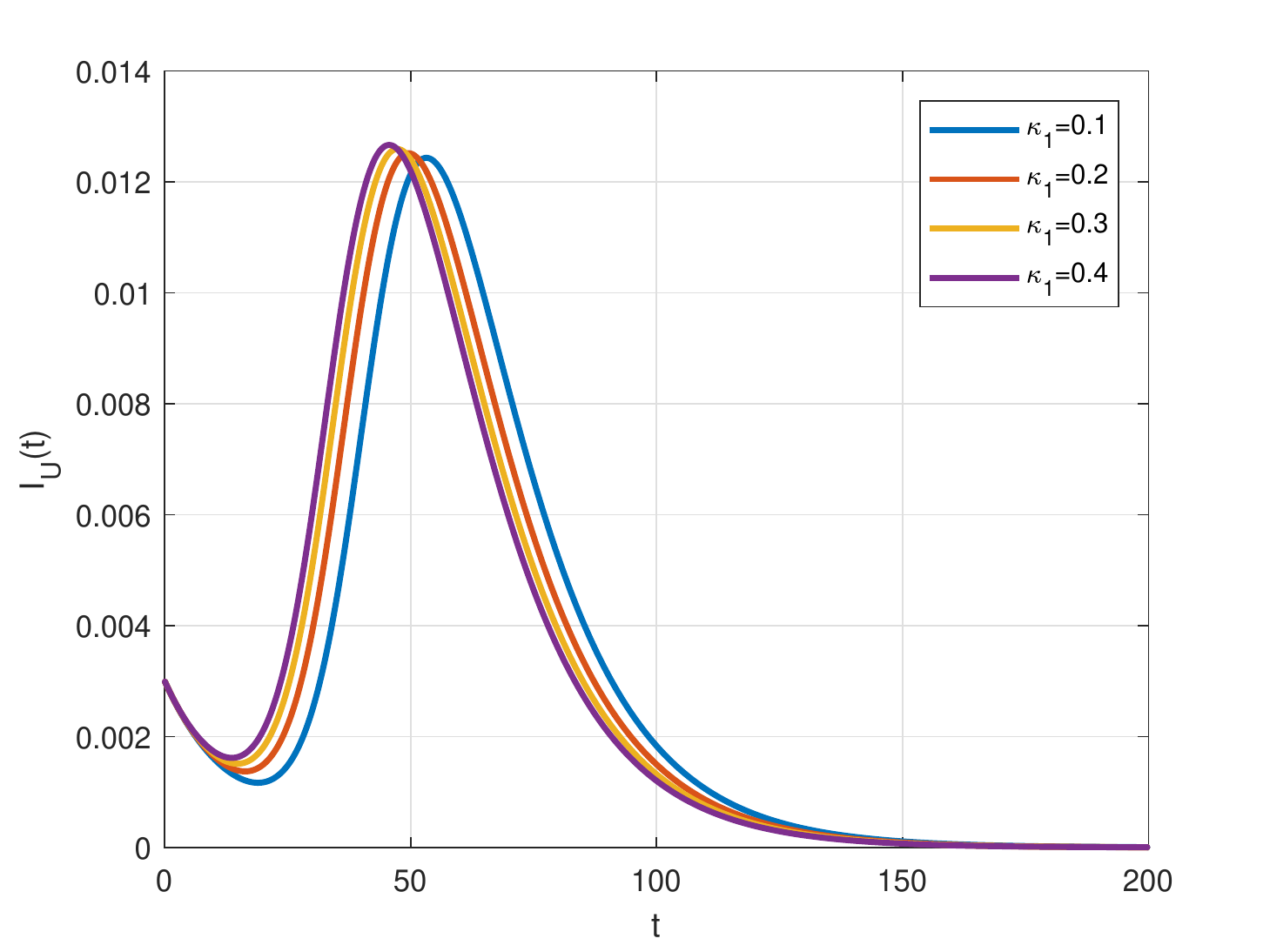}}}
\end{center}
\caption{Number of infected  populations for different values of $ \kappa_1 $ when $ \kappa_2 = 0.05. $  }
\label{kappa_1}
\end{figure}

Figure  \ref{kappa_2} illustrates the impact of varying the fraction of infected humans moving to forest area, $\kappa_2$, which shows that increasing the proportion of movement of infected humans leads to an increase in the number of infected vector in forest area and also slightly increases the infected human. Clearly, $\kappa_2$ has more impact on the infected vector compared with the infected human. This small variation in infected humans happens because the proportion of movement  of  susceptible human from rural areas to forest areas is fixed and also because the  biting rate of mosquitos is constant. Note that when $\kappa_2=0$, the  number of infected mosquitos in the forest area reaches zero which means that the disease will disappear from the forest, since the model assumes that infected humans are the only source of infection for vector population in the forest. Other sources of infection will be considered in future works.

\begin{figure}[H]
\begin{center}
\includegraphics[width=8.1cm,height=6cm]{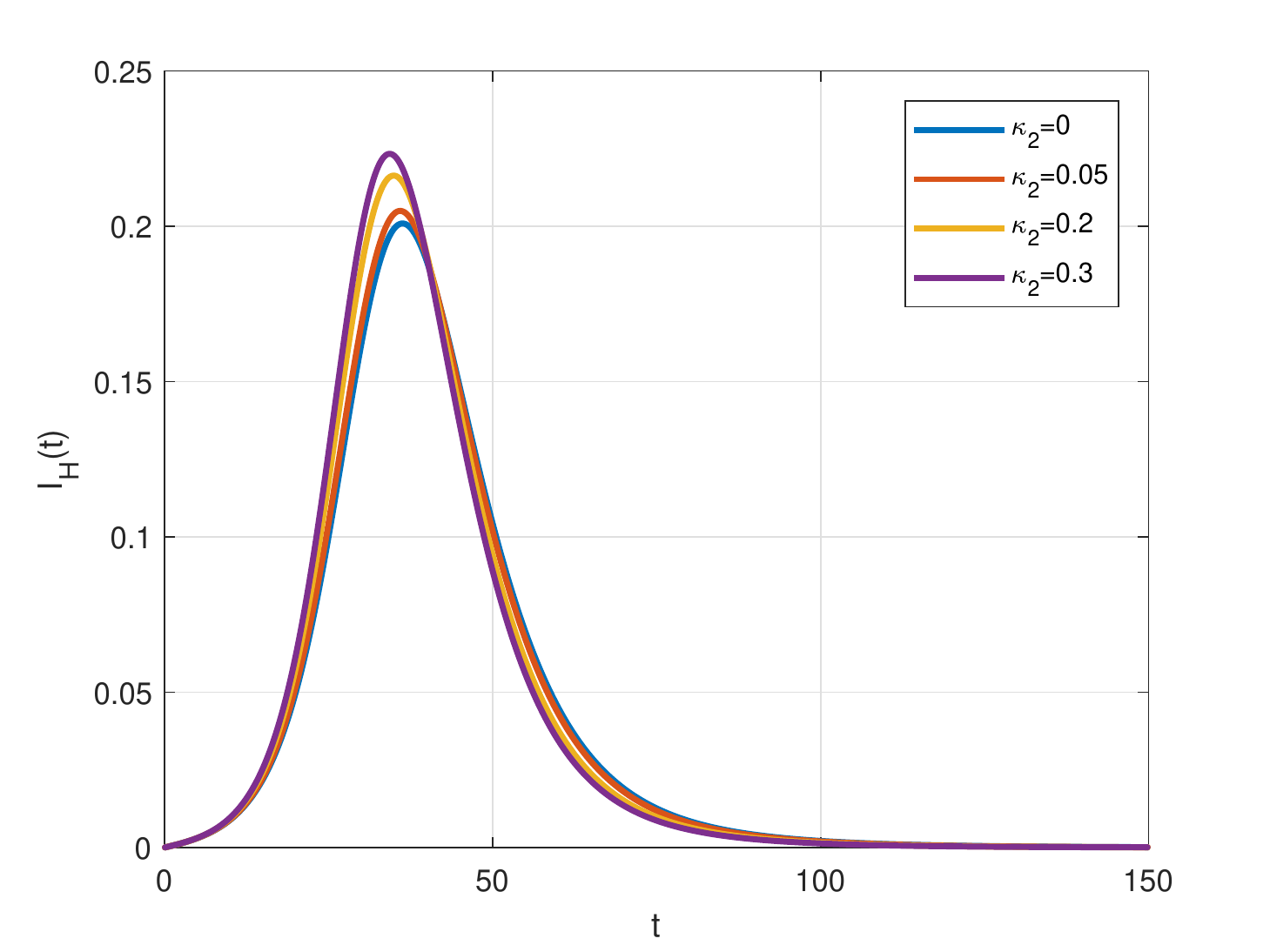}
\includegraphics[width=8.1cm,height=6cm]{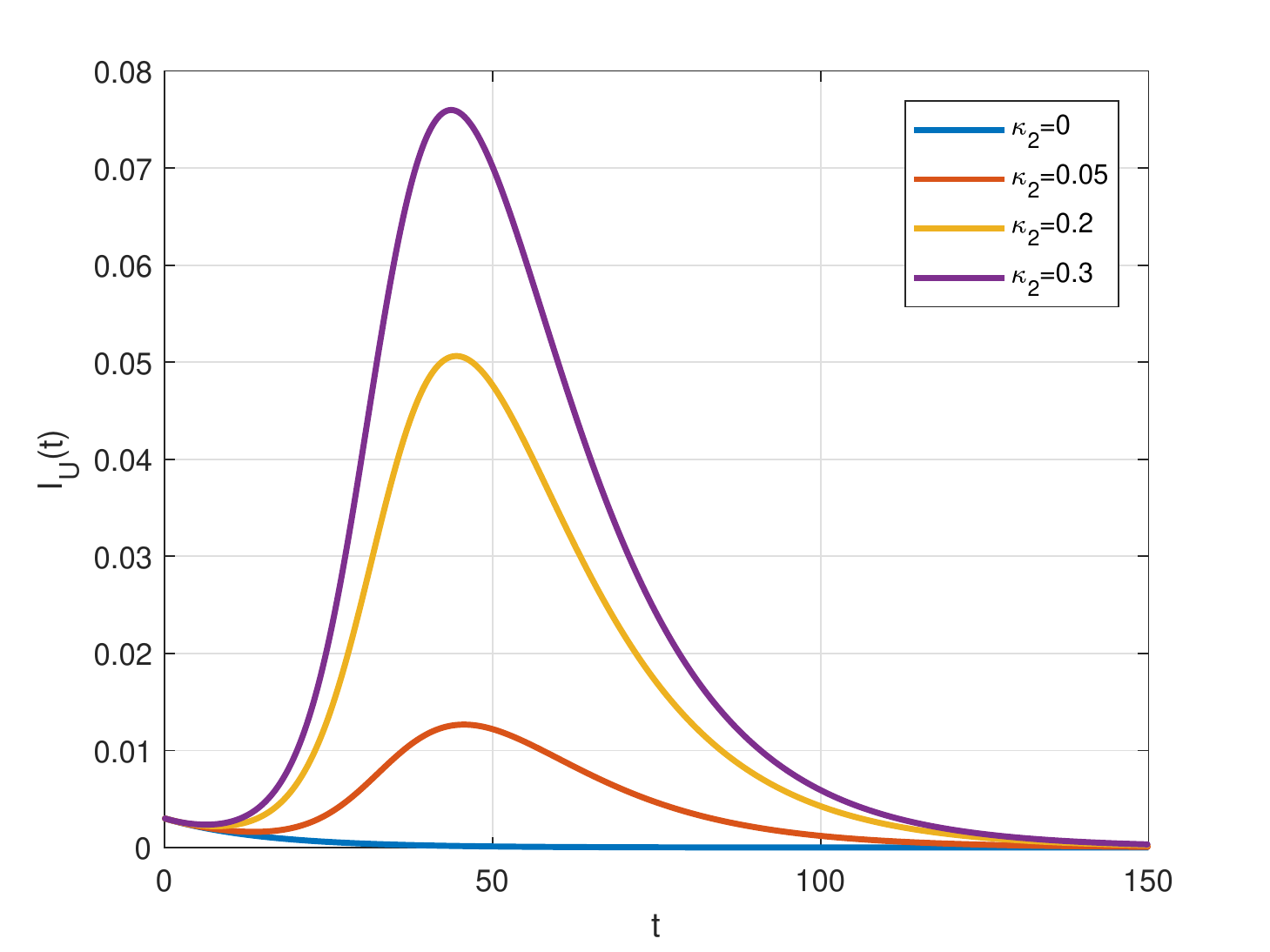}
\end{center}
\caption{Number of infected populations for different values of $ \kappa_2 $ when $ \kappa_1 = 0.4. $ }
\label{kappa_2}
\end{figure}

\section{Conclusion}
A mathematical model of ZIKV disease including human movement and vertical transmission  has been proposed. The model has been analyzed and studied to investigate the impact of human movement from rural area to forest area in the spread of ZIKV. The positivity of solution and boundedness of invariant region were discussed. The basic reproduction number $R_{0}$ was computed and expressed in terms of reproduction numbers related to the interactions between humans $R_{HH},$  between  human and vector in  rural area $R_{HV}$ and between  human and vector in  forest area $R_{HU}$. It was found that the  threshold of the disease which occurs at $ R_0=1 $ is equivalent to  $R_{HH}+R_{HV}+R_{HU}=1$. Sensitivity analysis of  $R_{0}$ was carried out and it showed that $R_0$
is sensitive to almost all model parameters either positively or negatively. However, the most
positive influential parameters are the biting rate of  rural mosquitoes  on humans and  the direct  transmission rate between humans,  while the recovery rate of humans has the most negative impact. The parameters, $\kappa_1$ and $\kappa_2$, representing the proportions of susceptible and infected humans moving to forest areas, respectively, were found to  have a small positive effect on $R_0.$ Then, the local and global  stability of 
the disease free equilibrium were derived whenever $R_0$ is less than unity. Furthermore,  the system posses  endemic equilibrium and it is locally asymptotically stable when $R_0$ is greater than
unity since the direction of the  bifurcation was found to be forward. The bifurcation analysis was presented both analytically and graphically. Finally,
 numerical simulations  are presented  to demonstrate the theoretical results. The obtained figures confirmed that the human movement from rural areas to forests has a small effect in increasing the infected human and vector populations.

\bibliographystyle{plainnat}
\bibliography{Mybib}

\begin{thebibliography}{29}
\providecommand{\natexlab}[1]{#1}
\providecommand{\url}[1]{\texttt{#1}}
\expandafter\ifx\csname urlstyle\endcsname\relax
  \providecommand{\doi}[1]{doi: #1}\else
  \providecommand{\doi}{doi: \begingroup \urlstyle{rm}\Url}\fi

\bibitem[Agusto et~al.(2017)Agusto, Bewick, and Fagan]{***}
F.~B. Agusto, S.~Bewick, and W.~F. Fagan.
\newblock Mathematical model of zika virus with vertical transmission.
\newblock \emph{Infectious Disease Modelling}, 2\penalty0 (2):\penalty0
  244--267, 2017.
\newblock \doi{10.1016/j.idm.2017.05.003}.

\bibitem[Biswas et~al.(2020)Biswas, Ghosh, and Sarkar]{****}
S.~K. Biswas, U.~Ghosh, and S.~Sarkar.
\newblock Mathematical model of zika virus dynamics with vector control and
  sensitivity analysis.
\newblock \emph{Infectious Disease Modelling}, 5:\penalty0 23--41, 2020.
\newblock \doi{10.1016/j.idm.2019.12.001}.

\bibitem[Bonyah and Okosun(2016)]{vdeath}
E.~Bonyah and K.~O. Okosun.
\newblock Mathematical modeling of zika virus.
\newblock \emph{Asian Pacific Journal of Tropical Disease}, 6\penalty0
  (9):\penalty0 673--679, 2016.
\newblock \doi{10.1016/S2222-1808(16)61108-8}.

\bibitem[Bonyah et~al.(2017)Bonyah, Khan, Okosun, and Islam]{**}
E.~Bonyah, M.~A. Khan, KO~Okosun, and S.~Islam.
\newblock A theoretical model for zika virus transmission.
\newblock \emph{PloS one}, 12\penalty0 (10):\penalty0 e0185540, 2017.
\newblock \doi{10.1371/journal.pone.0185540}.

\bibitem[Brasil et~al.(2020)Brasil, Vasconcelos, Kerin, Gabaglia, Ribeiro,
  Bonaldo, Damasceno, Pone, Pone, Zin, et~al.]{brasil2020zika}
P.~Brasil, Z.~Vasconcelos, T.~Kerin, C.~R. Gabaglia, I.~P. Ribeiro, M.~C.
  Bonaldo, L.~Damasceno, M.~V. Pone, S.~Pone, A.~Zin, et~al.
\newblock Zika virus vertical transmission in children with confirmed antenatal
  exposure.
\newblock \emph{Nature communications}, 11\penalty0 (1):\penalty0 1--8, 2020.
\newblock \doi{10.1038/s41467-020-17331-0}.

\bibitem[Bueno et~al.(2016)Bueno, Martinez, Abdalla, Duarte~dos Santos, and
  Chame]{Bueno}
M.~G. Bueno, N.~Martinez, L.~Abdalla, C.~N. Duarte~dos Santos, and M.~Chame.
\newblock Animals in the zika virus life cycle: what to expect from megadiverse
  latin american countries.
\newblock \emph{PLoS neglected tropical diseases}, 10\penalty0 (12):\penalty0
  e0005073, 2016.
\newblock \doi{10.1371/journal.pntd.0005073}.

\bibitem[Campos et~al.(2015)Campos, Bandeira, and Sardi]{4}
G.~S. Campos, A.~C. Bandeira, and S.~I. Sardi.
\newblock Zika virus outbreak, bahia, brazil.
\newblock \emph{Emerging infectious diseases}, 21\penalty0 (10):\penalty0 1885,
  2015.
\newblock \doi{10.3201/eid2110.150847}.

\bibitem[Cao-Lormeau et~al.(2016)Cao-Lormeau, Mons, Last{\`e}re, Roche,
  Vanhomwegen, Dub, Teissier, Larre, et~al.]{6}
A.~Cao-Lormeau, V-M.and~Blake, S.~Mons, S.~Last{\`e}re, C.~Roche,
  J.~Vanhomwegen, L.~Dub, T.and~Baudouin, A.~Teissier, P.~Larre, et~al.
\newblock Guillain-barr{\'e} syndrome outbreak associated with zika virus
  infection in french polynesia: a case-control study.
\newblock \emph{The Lancet}, 387\penalty0 (10027):\penalty0 1531--1539, 2016.
\newblock \doi{10.1016/S0140-6736(16)00562-6}.

\bibitem[Castilla-Chavez et~al.(2001)Castilla-Chavez, Feng, and Huang]{gas}
C.~Castilla-Chavez, Z.~Feng, and W.~Huang.
\newblock On the computation of r0 and its role on global stability.
\newblock \emph{Biometrics}, 125:\penalty0 31--65, 2001.
\newblock \doi{10.1007/978-1-4757-3667-0_13}.

\bibitem[Castillo-Chavez and Song(2004)]{song}
C.~Castillo-Chavez and B.~Song.
\newblock Dynamical models of tuberculosis and their applications.
\newblock \emph{Mathematical Biosciences \& Engineering}, 1\penalty0
  (2):\penalty0 361, 2004.
\newblock \doi{10.3934/mbe.2004.1.361}.

\bibitem[CDC()]{CDC1}
CDC.
\newblock Zika: Transmission and risks.
\newblock \emph{https://www.cdc.gov/zika/transmission/}.

\bibitem[Curtiss(1918)]{dsr}
D.R. Curtiss.
\newblock Recent extentions of descartes' rule of signs.
\newblock \emph{Annals of Mathematics}, pages 251--278, 1918.
\newblock \doi{10.2307/1967494}.

\bibitem[Dhooge et~al.(2008)Dhooge, Govaerts, Kuznetsov, Meijer, and
  Sautois]{matcont}
A.~Dhooge, W.~Govaerts, Y.~A. Kuznetsov, H.~G.~E. Meijer, and B.~Sautois.
\newblock New features of the software matcont for bifurcation analysis of
  dynamical systems.
\newblock \emph{Mathematical and Computer Modelling of Dynamical Systems},
  14\penalty0 (2):\penalty0 147--175, 2008.
\newblock \doi{10.1080/13873950701742754}.

\bibitem[Dudley et~al.(2016)Dudley, Aliota, Mohr, Weiler, Lehrer-Brey,
  Weisgrau, Mohns, Breitbach, Rasheed, Newman, et~al.]{Dudley}
D.~M. Dudley, M.~T. Aliota, E.L. Mohr, A.~M. Weiler, G.~Lehrer-Brey, K.~L.
  Weisgrau, M.~S. Mohns, Me.~E. Breitbach, M.~N. Rasheed, C.~M. Newman, et~al.
\newblock A rhesus macaque model of asian-lineage zika virus infection.
\newblock \emph{Nature communications}, 7\penalty0 (1):\penalty0 1--9, 2016.
\newblock \doi{10.1101/046334}.

\bibitem[Goswami et~al.(2018)Goswami, Srivastav, Ghosh, and Shanmukha]{*}
N.~K. Goswami, A.~K. Srivastav, M.~Ghosh, and B.~Shanmukha.
\newblock Mathematical modeling of zika virus disease with nonlinear incidence
  and optimal control.
\newblock In \emph{Journal of Physics: Conference Series}, volume 1000, pages
  1--16, 2018.
\newblock \doi{10.1088/1742-6596/1000/1/012114}.

\bibitem[Heang et~al.(2012)Heang, Yasuda, Sovann, Haddow, da~Rosa, Tesh, and
  Kasper]{7}
V.~Heang, C.~Y. Yasuda, L.~Sovann, A.~D. Haddow, A.~P.~T. da~Rosa, R.~B. Tesh,
  and M.~R. Kasper.
\newblock Zika virus infection, cambodia, 2010.
\newblock \emph{Emerging infectious diseases}, 18\penalty0 (2):\penalty0 349,
  2012.
\newblock \doi{10.3201/eid1802.111224}.

\bibitem[Hlaing et~al.(2017)Hlaing, Kamiyama, and Saito]{livelihoods}
Z.~C. Hlaing, C.~Kamiyama, and O.~Saito.
\newblock Interaction between rural people’s basic needs and forest products:
  A case study of the katha district of myanmar.
\newblock \emph{International Journal of Forestry Research}, 2017, 2017.
\newblock \doi{10.1155/2017/2105012}.

\bibitem[Hubbard and West(2012)]{soto}
J.~H Hubbard and B.~H. West.
\newblock \emph{Differential equations: a dynamical systems approach:
  higher-dimensional systems}, volume~18.
\newblock Springer Science \& Business Media, 2012.

\bibitem[Lai et~al.(2020)Lai, Zhou, Zhou, Liu, Xu, Gu, Yan, and
  Chen]{lai2020vertical}
Z.~Lai, T.~Zhou, J.~Zhou, S.~Liu, Y.~Xu, J.~Gu, G.~Yan, and X.-G. Chen.
\newblock Vertical transmission of zika virus in aedes albopictus.
\newblock \emph{PLoS neglected tropical diseases}, 14\penalty0 (10):\penalty0
  e0008776, 2020.
\newblock \doi{10.1371/journal.pntd.0008776}.

\bibitem[Maxian et~al.(2017)Maxian, Neufeld, Talis, Childs, and
  Blackwood]{biting}
O.~Maxian, A.~Neufeld, E.~J. Talis, L.~M. Childs, and J.~C. Blackwood.
\newblock Zika virus dynamics: When does sexual transmission matter?
\newblock \emph{Epidemics}, 21:\penalty0 48--55, 2017.
\newblock \doi{10.1016/j.epidem.2017.06.003}.

\bibitem[Petersen et~al.(2016)Petersen, Staples, Meaney-Delman, Fischer,
  Callaghan, and Jamieson]{MMWR}
E.~E. Petersen, J.~E. Staples, D.~Meaney-Delman, S.~R Fischer, M.and~Ellington,
  W.~M Callaghan, and D.~J Jamieson.
\newblock Interim guidelines for pregnant women during a zika virus
  outbreak—united states, 2016.
\newblock 65:\penalty0 30–33, 2016.
\newblock \doi{10.15585/mmwr.mm6502e1}.

\bibitem[Rodrigues et~al.(2013)Rodrigues, Monteiro, and Torres]{nsens}
H.~S. Rodrigues, M.~T.~T. Monteiro, and D.~FM Torres.
\newblock Sensitivity analysis in a dengue epidemiological model.
\newblock In \emph{Conference Papers in Science}, volume 2013. Hindawi, 2013.
\newblock \doi{10.1155/2013/721406}.

\bibitem[Sambariya and Prasad(2012)]{routh}
D.K. Sambariya and R.~Prasad.
\newblock Routh stability array method based reduced model of single machine
  infinite bus with power system stabilizer.
\newblock In \emph{International Conference on Emerging Trends in Electrical,
  Communication and Information Technologies (ICECIT-2012)}, pages 27--34,
  2012.
\newblock \doi{10.13140/RG.2.1.4041.8325}.

\bibitem[Schuler-Faccini et~al.(2016)Schuler-Faccini, Ribeiro, Feitosa,
  Horovitz, Cavalcanti, Pessoa, Doriqui, Neri, de~Pina~Neto, Wanderley,
  et~al.]{5}
L.~Schuler-Faccini, E.~M. Ribeiro, I.~ML Feitosa, D.~DG Horovitz, D.~P.
  Cavalcanti, A.~Pessoa, M.~J.~R. Doriqui, J.~I. Neri, J.~M. de~Pina~Neto,
  H.~YC Wanderley, et~al.
\newblock Possible association between zika virus infection and
  microcephaly—brazil.
\newblock \emph{Morbidity and Mortality Weekly Report}, 65\penalty0
  (3):\penalty0 59--62, 2016.
\newblock \doi{10.15585/mmwr.mm6503e2}.

\bibitem[Suparit et~al.(2018)Suparit, Wiratsudakul, and Modchang]{gamma}
P.~Suparit, A.~Wiratsudakul, and C.~Modchang.
\newblock A mathematical model for zika virus transmission dynamics with a
  time-dependent mosquito biting rate.
\newblock \emph{Theoretical Biology and Medical Modelling}, 15\penalty0
  (1):\penalty0 11, 2018.
\newblock \doi{10.1186/s12976-018-0083-z}.

\bibitem[Thangamani et~al.(2016)Thangamani, Huang, Hart, Guzman, and
  Tesh]{Than}
S.~Thangamani, J.~Huang, C.~E. Hart, H.~Guzman, and R.~B. Tesh.
\newblock Vertical transmission of zika virus in aedes aegypti mosquitoes.
\newblock \emph{The American journal of tropical medicine and hygiene},
  95\penalty0 (5):\penalty0 1169--1173, 2016.
\newblock \doi{10.1371/journal.pntd.0008776}.

\bibitem[Troncoso(2016)]{19}
A.~Troncoso.
\newblock Zika threatens to become a huge worldwide pandemic.
\newblock \emph{Asian Pacific Journal of Tropical Biomedicine}, 6\penalty0
  (6):\penalty0 520--527, 2016.
\newblock \doi{10.1016/j.apjtb.2016.04.004}.

\bibitem[Van~den Driessche(2017)]{sens}
P.~Van~den Driessche.
\newblock Reproduction numbers of infectious disease models.
\newblock \emph{Infectious Disease Modelling}, 2\penalty0 (3):\penalty0
  288--303, 2017.
\newblock \doi{10.1016/j.idm.2017.06.002}.

\bibitem[Van~den Driessche and Watmough(2002)]{ngm}
P.~Van~den Driessche and J.~Watmough.
\newblock Reproduction numbers and sub-threshold endemic equilibria for
  compartmental models of disease transmission.
\newblock \emph{Mathematical biosciences}, 180\penalty0 (1-2):\penalty0 29--48,
  2002.
\newblock \doi{10.1016/S0025-5564(02)00108-6}.

\end{thebibliography}

\end{document}